\let\orivec\vec
\DeclareFontShape{T1}{lmr}{bx}{sc}{ <-> ssub * cmr/bx/sc }{}
\let\springervec\vec
\let\vec\orivec
\let\vec\springervec
\newcommand{\ind}{\ensuremath{\textsc{Ind}}}
\newcommand{\con}{\ensuremath{\textsc{Con}}}
\newcommand{\clique}{\ensuremath{\textsc{Clique}}}
\newcommand{\stcut}{\ensuremath{s\mbox{-}t\mbox{-}\textsc{Cut}}}
\newcommand{\msum}{\ensuremath{\textsc{Sum}}}
\newcommand{\mmax}{\ensuremath{\textsc{Max}}}
\newcommand{\mnum}{\ensuremath{\textsc{Num}}}
\newcommand{\Opt}{\ensuremath{\texttt{Opt}}}
\newcommand{\Min}{\ensuremath{\texttt{Min}}}
\renewcommand{\P}{\ensuremath{\textsf{\upshape P}}}
\newcommand{\NP}{\ensuremath{\textsf{\upshape NP}}}
\newcommand{\True}{\ensuremath{\texttt{True}}}
\newcommand{\False}{\ensuremath{\texttt{False}}}
\begin{document}

\renewcommand{\O}{\ensuremath{O}}

\title{Exact and approximate algorithms for movement problems on (special classes of) graphs
\thanks{Research partially supported by the Research Grant PRIN 2010 ``ARS
TechnoMedia'' funded by the Italian Ministry of University and Research.}
\thanks{A preliminary version of this work appeared in the \emph{Proceedings of the 20th Colloquium on Structural Information and Communication Complexity (SIROCCO'13)}, LNCS 8179, Springer, 322--333, 2013. DOI: http://dx.doi.org/10.1007/978-3-319-03578-9\_27}
}

\author{Davide Bil\`o\inst{1} \and
Luciano Gual\`a\inst{2} \and Stefano Leucci\inst{3} \and Guido Proietti\inst{3,4}
\institute{Dipartimento di Scienze Umanistiche e Sociali,
Università di Sassari, Italy \and Dipartimento di Ingegneria dell'Impresa,
Università di Roma ``Tor Vergata", Italy \and Dipartimento di
Ingegneria e Scienze dell'Informazione e Matematica, \\Università degli Studi dell'Aquila, Italy  \and Istituto di Analisi dei Sistemi
  ed Informatica,
  CNR, Roma, Italy \\
E-mail: \texttt{davide.bilo@uniss.it; guala@mat.uniroma2.it; stefano.leucci@univaq.it; guido.proietti@univaq.it}
}}
           
\date{Received: date / Accepted: date}

\maketitle

\begin{abstract}
When a large collection of objects (e.g., robots, sensors, etc.) has to be deployed in a given environment, it is often required to plan a coordinated motion of the objects from their initial position to a final configuration enjoying some global property. In such a scenario, the problem of minimizing some function of the distance travelled, and therefore energy consumption, is of vital importance. In this paper we study several motion planning problems that arise when the objects must be moved on a graph, in order to reach certain goals which are of interest for several network applications. Among the others, these goals include broadcasting messages and forming connected or interference-free networks.
We study these problems with the aim of minimizing a number of natural measures such as the average/overall distance travelled, the maximum distance travelled, or the number of objects that need to be moved. To this respect, we provide several approximability and inapproximability results, most of which are tight.
\end{abstract}

\section{Introduction}

In many practical applications a number of centrally controlled objects need to be moved in a given environment in order to complete some task. Problems of this kind often occur in robot motion planning where we seek to move a set of robots from their starting position to a set of ending positions such that a certain property is satisfied.
For example, if the robots are equipped with a short range communication device we might want to move them so that a message originating from one of the robots can be routed to all the others. If the robots' goal is to monitor a certain area we might want to move them so that they are not too close to each other. Other interesting problems include gathering (placing robots next to each other), monitoring of traffic between two locations, building interference-free networks, and so on.
To make things harder, objects to be moved are often equipped with a limited supply of energy. Preserving energy is a critical problem in
ad-hoc networking, and movements are expensive. To prolong the lifetime of the objects we seek to minimize the energy consumed during movements and thus the distance travelled.
Sometimes, instead, movements are cheap but before and/or after an object moves it needs to perform expensive operations. In this scenario we might be interested in moving the minimum number of objects needed to reach the goal.

In this paper, we assume the underlying environment is actually a \emph{network}, which can be modelled as an undirected graph $G$, and the moving objects are centrally controlled \emph{pebbles} that are initially placed on vertices of $G$, and that can be moved to other vertices by traversing the graph edges. To this respect, we study several movement planning problems that arise by various combinations of final positioning goals and movement optimization measures.
In particular, we focus our study on the scenarios where we want the pebbles to be moved to a \emph{connected subgraph} (\con), an \emph{independent set} (\ind), or a \emph{clique} (\clique) of $G$, while minimizing either the \emph{overall movement} (\msum), the \emph{maximum movement} (\mmax), or the \emph{number of moved pebbles} (\mnum). We also give some preliminary results on the problem of moving the pebbles to an \emph{s-t-cut}, i.e., a set of vertices whose removal makes two given vertices $s,t$ disconnected (\stcut) while minimizing the above measures.

We will denote each of the above problems with $\psi$-$c$, where $\psi$ represents the goal to be achieved and $c$ the measure to be minimized. For a more rigorous definition of the problems we refer the reader to Section \ref{sec:formal_definition}.

\paragraph{Related work.}
Although movement problems were deeply investigated in a distributed setting (see \cite{PS06} for a survey), quite surprisingly the centralized counterpart has received attention from the scientific community only in the last few years.

The first paper which defines and studies these problems in this latter setting is \cite{demaine2007minimizing}. In their work, the authors study the problem of moving the pebbles on a graph $G$ of $n$ vertices so that their final positions form a \emph{connected component}, a \emph{path} (directed or undirected) \emph{between two specified nodes}, an \emph{independent set}, or a \emph{matching} (two pebbles are matched together if their distance is exactly $1$). 

Regarding connectivity problems, in \cite{demaine2007minimizing} the authors show that all the variants are hard and that the approximation ratio of $\con$-$\mmax$ is between $2$ and $\O(1+\sqrt{k/c^*})$, where $k$ is the number of pebbles and $c^*$ denotes the measure of an optimal solution. This result has been improved in \cite{berman2011}, where the authors show that $\con$-$\mmax$ can be approximated within a constant factor. In \cite{demaine2007minimizing} it is also shown that $\con$-$\msum$ and $\con$-$\mnum$ are not approximable within $\O(n^{1-\epsilon})$ (for any positive $\epsilon$) and $o(\log n)$, respectively, while they admit approximation algorithms with ratios of $\O(\min\{n \log n, k\})$ and $\O(k^\epsilon)$, respectively. Moreover, the authors also provide an exact polynomial-time algorithm for $\con$-$\mmax$ on trees.

Concerning independency problems, in \cite{demaine2007minimizing} the authors remark that it is \NP-hard even to find any feasible solution on general graphs since it would require to find an independent set of size at least $k$. This clearly holds for all three objective functions. For this reason, they study an Euclidean variant of these problems where pebbles have to be moved on a plane so that their pairwise distances are strictly greater than $1$. In this case, the authors provide an approximation algorithm that guarantees an additive error of at most $1+1/\sqrt{3}$ for $\ind$-$\mmax$, and a polynomial time approximation scheme for $\ind$-$\mnum$.

More recently, in \cite{friggstad2011minimizing}, a variant of the classical facility location problem has been studied. This variant, called \emph{mobile facility location}, can be modelled as a movement problem and is approximable within $(3+\epsilon)$ (for any constant $\epsilon>0$) if we seek to minimize the total movement \cite{ahmadian2013local}, while the variant where the maximum movement has to be minimized admits a tight $2$-approximation \cite{demaine2007minimizing, friggstad2011minimizing}. Moreover, as it is frequent in the practice to have a small number of pebbles compared to the size of the environment (i.e., the vertices of the graph), the authors of~\cite{demaine2009FPT} turn to study fixed-parameter tractability. They show a relation between the complexity of the problems and their \emph{minimal configurations} (sets of final positions of the pebbles that correspond to feasible solutions, such that any removal of an edge makes them unacceptable). Finally, we mention that in \cite{BDGMPW13} it was considered a set of vertex-to-vertex motion planning problems in a simple polygon, with the aim of forming final configurations enjoying some sort of \emph{visual connectivity} among the pebbles.

\paragraph{Our results.}

We start by studying connectivity motions problems in the case where pebbles move on a tree, and we devise two polynomial-time dynamic programming algorithms for $\con$-$\msum$ and $\con$-$\mnum$. These algorithms complement the already known polynomial-time algorithm for $\con$-$\mmax$ on trees shown in \cite{demaine2007minimizing}.

Then, we study independency motion problems on graphs where a \emph{maximum independent set} (and thus a feasible solution for the corresponding motion problem) can be computed in polynomial time. This class of graphs includes, for example, perfect and claw-free graphs.
More precisely, we show that $\ind$-$\mmax$ and $\ind$-$\msum$ are \NP-hard even on bipartite graphs (which are known to be perfect graphs \cite{bollobas1998modern}).
Moreover, we devise three exact polynomial-time algorithms: one for solving $\ind$-$\mmax$ on paths, and the other two for solving $\ind$-$\msum$ and $\ind$-$\mnum$ on trees, respectively.
Moreover, we devise a polynomial-time approximation algorithm for $\ind$-$\mmax$ which is optimal unless an additive term of $1$ (this is clearly tight).

Concerning the problem of moving pebbles towards a clique of a general graph, we prove that all the three variants are \NP-hard. Then, we provide an approximation algorithm for $\clique$-$\mmax$ which is optimal unless an additive term of $1$ (this result is clearly tight). Moreover, we show that both $\clique$-$\msum$ and $\clique$-$\mnum$ are approximable within a factor of $2$, but they are not approximable within a factor better than $10\sqrt{5}-21 > 1.3606$, unless $\P=\NP$.
If the \emph{unique game conjecture} \cite{khot2002power} is true, then both problems are not approximable within a factor better than $2$ and the provided approximation algorithms are tight.
These results are obtained by showing a non-trivial relation with the \emph{minimum vertex cover} problem.
We also show that an exact solution for $\clique$-$\mnum$ can be computed in polynomial time on every class of graphs for which finding a \emph{maximum-weight clique} requires polynomial time (these classes of graphs also include perfect and claw-free graphs).

Finally, we present a strong inapproximability results of $\Omega(n^{1-\epsilon})$ (for any $\epsilon>0$) for $\stcut$-$\mmax$ and $\stcut$-$\msum$, unless $\P=\NP$, along with two approximation algorithms. The approximation algorithm for  $\stcut$-$\mmax$ is essentially tight, while we show that any constant-factor approximation for $\stcut$-$\mnum$ would imply a tight approximation for $\stcut$-$\msum$.

The paper is organized as follows: in Section~\ref{sec:formal_definition} we provide a formal definition of our problems, while in Sections~\ref{sec:con}--\ref{sec:stcut} we give our results for \con, \ind, \clique, and \stcut, respectively (for a summary of the state of the art of the studied problems, along with the results presented in this paper, see Table \ref{table:results}). Finally, Section~\ref{sec:concl} concludes the paper.

\begin{table}[!ht]
\centering
\scriptsize
\setlength{\tabulinesep}{.5mm} 
\setlength{\tabcolsep}{.5mm} 
\newcolumntype{L}{>{\raggedright\arraybackslash}p{3.58cm}}
\newcolumntype{M}{>{\raggedright\arraybackslash}p{3.40cm}}
\begin{tabu}[t]{|c|M|L|M|}
\hline\everyrow{\hline}
 & $\mmax$ & $\msum$ & $\mnum$ \\
$\con$ &
G: $2 \le \rho = \O(1)$ \hfill\cite{demaine2007minimizing, berman2011} \linebreak T: polynomial \hfill\cite{demaine2007minimizing} &
G: $\rho = \Omega(n^{1-\epsilon})$ \hfill\cite{demaine2007minimizing} \linebreak \phantom{G:} $\rho=\O(\min\{n \log n,k\})\!$ \hfill\cite{demaine2007minimizing} \linebreak \textbf{T: polynomial} &
G: $\rho=\Omega(\log n)$ \hfill\cite{demaine2007minimizing} \linebreak \phantom{G:} $\rho=\O(k^\epsilon)$ \hfill\cite{demaine2007minimizing} \linebreak \textbf{T: polynomial} \\
$\ind$ &
G: \NP-hard \hfill\cite{demaine2007minimizing} \linebreak \textbf{IS:} {\boldmath$c^* + 1$}\textbf{,} {\boldmath$\rho \le 2$} \linebreak \textbf{B:} {\boldmath$\rho \ge 2$}  \linebreak \textbf{P: polynomial} &
G: \NP-hard \hfill\cite{demaine2007minimizing} \linebreak \textbf{B: \NP-hard} \linebreak \textbf{T: polynomial} &
G: \NP-hard \hfill\cite{demaine2007minimizing} \linebreak \textbf{T: polynomial} \\
$\clique$ &
\textbf{G: \NP-hard} \linebreak \phantom{\textbf{G:}} {\boldmath$c^*+1$} &
\textbf{G:} {\boldmath$10\sqrt{5}-21 \le \rho \le 2$}  &
\textbf{G:} {\boldmath$10\sqrt{5}-21 \le \rho \le 2$} \linebreak \textbf{MWC: polynomial} \\
$\stcut$ &
\textbf{G:} {\boldmath$\rho = \Omega(n^{1-\epsilon})$} \linebreak \phantom{\textbf{G:}} {\boldmath$\rho \le d$} &
\textbf{G:} {\boldmath$\rho = \Omega(n^{1-\epsilon})$} \linebreak \phantom{\textbf{G:}} {\boldmath$\rho \le k \cdot d$} &
\textbf{G:} \textbf{\boldmath$\rho$-apx $\Longrightarrow (\rho \cdot d)$-apx for $\stcut$-$\msum$} \\
\end{tabu}
\ \\
\ \\

\label{table:results}
\caption{Known and new (in bold) results for the various motion problems on general graphs (G), bipartite graphs (B), graphs on which a maximum independent set or a maximum-weight clique can be computed in polynomial time (IS, MWC), trees (T), and paths (P). With $n$ and $d$ we denote the number of vertices and the diameter of $G$, respectively, while $k$ denotes the number of pebbles, $\rho$ denotes the best approximation ratio for the corresponding problem, and finally $c^*$ is the measure of an optimal solution. Notice that for independency problems on general graphs it is \NP-hard even to find any feasible solution. All the inapproximability results hold under the assumption that $\P\not=\NP$.}
\end{table}

\section{Formal definitions}
\label{sec:formal_definition}

	A pebble motion problem, denoted by $\psi$-$c$, is an optimization problem whose instances consist of a loop-free connected undirected graph $G=(V(G),E(G))$ on $n$ nodes, a set $P=[k]=\{1,\dots, k\}$ of \emph{pebbles}, a function $\sigma : P \to V(G)$ that assigns each pebble to a \emph{start vertex} of $G$, and a boolean predicate $\psi : 2^{V(G)} \to \{ \True,\False \}$ that assigns a truth value to every possible subset of vertices of $G$.
	
	A (feasible) solution is a function $\mu : P \to V(G)$ that maps each pebble to an \emph{end vertex} of $G$ (in other words, \emph{moves} a pebble from its start to its end position) such that $\psi(\mu[P])$ is true, where $\mu[P]$ denotes the image of $P$ under $\mu$.
	Notice that, in general, it is not required for $\sigma$ or $\mu$ to be injective and thus we allow more than one pebble to be placed on the same vertex. In the rest of the paper, we will assume that a pebble moving from a vertex $u$ to a vertex $v$ always uses a \emph{shortest path} in $G$ between $u$ and $v$, say $\pi_G(u,v)$. Moreover we denote by $d_G(u,v)$ the length of such a path.
	Finally, $c(\mu) \in \mathbb{N}_0$ is a measure function that assigns a non-negative integer to each feasible solution (i.e., to each set of moves). A solution $\mu^*$ that minimizes $c$ is said to be \emph{optimal}.
	
	In the following, we will study some of the movement problems that arise from the different choices of predicates and measures. 	
		In particular, we will consider the following predicates:
	\begin{description}
		\item[\emph{Connectivity:}] $\con(U)$ is true if and only if the subgraph of $G$ induced by the set of vertices $U \subseteq V(G)$ is connected;
		\item[\emph{Independency:}] $\ind(U)$ is true if and only if $U \subseteq V(G)$ is an independent set of $G$ of size $k$, i.e., there is at most one pebble per vertex and no two pebbles are on adjacent vertices;
		\item[\emph{Clique:}] $\clique(U)$ is true if and only if $U  \subseteq V(G)$ induces a clique in $G$, i.e., for each pair $u,v$ of distinct vertices in $U$ there exists the edge $(u,v) \in E(G)$;
		\item[\emph{$s$-$t$-Cut:}] Given $s,t \in V(G)$ with $s \not= t$, then $\stcut(U)$ is true if and only if $s \not \in U$, $t \not\in U$ and $U  \subseteq V(G)$ is an \emph{s-t}-cut (i.e., there exists no path between $s$ and $t$ in the graph induced by the vertices in $V(G) \setminus U$);
	\end{description}		
	\noindent and the following measures:
	\begin{description}
		\item[\emph{Overall movement:}] The sum of the distances travelled by pebbles has to be minimized: every pebble $p \in P$ moves from its starting vertex $\sigma(p)$ to his end vertex $\mu(p)$, so the overall distance is $\msum(\mu) = \sum_{p\in P} d_G( \sigma(p), \mu(p) )$;
		\item[\emph{Maximum movement:}] We want to minimize the maximum distance travelled by a pebble, i.e., the measure $\mmax(\mu)= \max_{p \in P} d_G( \sigma(p), \mu(p) )$;
		\item[\emph{Number of moved pebbles:}] We aim to minimize the number of pebbles that need to be moved from their starting positions. The associated measure is $\mnum(\mu) = | \{ p \in P : \sigma(p) \not= \mu(p)\} |$.
	\end{description}	

	\section{Connectivity motion problems}
	\label{sec:con}

	In this section we describe two polynomial-time algorithms for solving on trees $\con$-$\msum$ and $\con$-$\mnum$, respectively. In this way we complement the result provided in \cite{demaine2007minimizing} for $\con$-$\mmax$ on trees.

	\subsection{Solving $\con$-$\msum$ on trees}	
	\label{sec:con_sum_trees}

	Our dynamic-programming algorithm relies on the following property of optimal solutions:
	\begin{lemma}
		\label{lemma:sum_tree_no_crossing}
		In any optimal solution $\mu^*$ for an instance of $\con$-$\msum$ on trees, there exists no edge that is traversed in opposite directions by pebbles.
	\end{lemma}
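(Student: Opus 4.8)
The plan is to argue by contradiction via an exchange argument. Suppose $\mu^*$ is an optimal solution for $\con$-$\msum$ on a tree $T$, and suppose for contradiction that some edge $e=(u,v)$ is traversed in opposite directions: there is a pebble $p$ whose shortest path $\pi_T(\sigma(p),\mu^*(p))$ uses $e$ in the direction from $u$ to $v$, and a pebble $q$ whose shortest path uses $e$ in the direction from $v$ to $u$. Since $T$ is a tree, removing $e$ splits $T$ into two components $T_u \ni u$ and $T_v \ni v$. The fact that $p$ crosses $e$ from $u$ to $v$ means $\sigma(p) \in V(T_u)$ and $\mu^*(p) \in V(T_v)$; symmetrically $\sigma(q) \in V(T_v)$ and $\mu^*(q) \in V(T_u)$.

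The key step is to modify $\mu^*$ by ``rerouting'' $p$ and $q$ so that they stay on their own sides, and then show the new solution is still feasible and has strictly smaller cost, contradicting optimality. Concretely, I would define $\mu'$ that agrees with $\mu^*$ except that it swaps the targets of $p$ and $q$: set $\mu'(p) = \mu^*(q)$ and $\mu'(q) = \mu^*(p)$. For feasibility, observe that $\mu'[P] = \mu^*[P]$ as a set (since $p$ and $q$ merely trade end vertices, the image is unchanged), so $\con(\mu'[P])$ still holds. For the cost, I use the fact that in a tree the distance decomposes across the bridge $e$: for any $x \in V(T_u)$ and $y \in V(T_v)$ we have $d_T(x,y) = d_T(x,u) + 1 + d_T(v,y)$. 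Writing $a=\sigma(p)\in V(T_u)$, $b=\mu^*(p)\in V(T_v)$, $c=\sigma(q)\in V(T_v)$, $d=\mu^*(q)\in V(T_u)$, the old contribution of $p,q$ is
\[
d_T(a,b) + d_T(c,d) = \bigl(d_T(a,u)+1+d_T(v,b)\bigr) + \bigl(d_T(c,v)+1+d_T(u,d)\bigr),
\]
whereas the new contribution is $d_T(a,\mu'(p)) + d_T(c,\mu'(q)) = d_T(a,d) + d_T(c,b)$. Now $a,d \in V(T_u)$ and $c,b \in V(T_v)$, so these last two distances are realized entirely within $T_u$ and $T_v$ respectively; in particular $d_T(a,d) \le d_T(a,u) + d_T(u,d)$ and $d_T(c,b) \le d_T(c,v) + d_T(v,b)$ by the triangle inequality. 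Hence the new contribution is at most $d_T(a,u)+d_T(u,d)+d_T(c,v)+d_T(v,b)$, which is exactly the old contribution minus $2$. So $\msum(\mu') \le \msum(\mu^*) - 2 < \msum(\mu^*)$, contradicting optimality of $\mu^*$.

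The main obstacle, and the point that needs care, is bookkeeping when more than one pebble crosses $e$ in each direction, and making sure the ``swap'' argument composes correctly: one should phrase it as choosing any pair $p, q$ crossing $e$ in opposite directions and performing a single swap, which is enough to reach a contradiction with a globally optimal solution, so there is no need to simultaneously fix all violating edges. A minor subtlety is that $\mu^*$ need not be injective, so $\mu^*(q)$ and $\mu^*(p)$ could coincide with other pebbles' targets; this is harmless because the swap leaves $\mu^*[P]$ unchanged as a set and the cost computation above involves only $p$ and $q$. One should also note that after the swap, pebbles $p$ and $q$ no longer traverse $e$ at all (their new routes lie within $T_u$ and $T_v$), which is what makes the strict decrease possible; it is not merely a cancellation of directed traversals but an honest reduction in total travelled distance.
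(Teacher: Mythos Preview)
Your proof is correct and follows essentially the same exchange argument as the paper: swap the destinations of two pebbles crossing the edge in opposite directions, observe that the image set (hence feasibility) is unchanged, and show the total distance drops by at least $2$. Your write-up is in fact a bit more explicit about the distance bookkeeping (via the bridge decomposition and the triangle inequality) than the paper's terse ``saving $2$'', but the idea is identical.
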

	\begin{proof}
		Let $\pi_G(u,v)$ denote a shortest path in $G$ between the vertices $u$ and $v$.
		Suppose by contradiction that there exists an optimal solution $\mu^*$, an edge $(x,y) \in E(G)$, and two pebbles $p, q \in P$ such that $p$ moves through the path $\pi_G(\sigma(p), x) \cup \{ (x,y) \} \cup \pi_G(y, \mu^*(p))$ and $q$ moves through the path $\pi_G(\sigma(q), y) \cup \{ (y,x) \} \cup \pi_G(x, \mu^*(q))$.
		Consider the solution $\mu^\prime$ obtained from $\mu^*$ by swapping the final positions for $p$ and $q$, i.e., $\mu^\prime(p)=\mu^*(q)$ and $\mu^\prime(q)=\mu^*(p)$.
Clearly $\mu^\prime[P]=\mu^*[P]$, therefore $\mu^\prime$ is feasible. Moreover $c(\mu^\prime) < c(\mu^*)$ as we can move $p$ through the path $\pi_G(\sigma(p), x) \cup \pi_G(x,\mu^\prime(p))$ and $q$ through the path $\pi_G(\sigma(q), y) \cup \pi_G(y, \mu^\prime(q))$, thus saving $2$.
	\qed\end{proof}

	The algorithm first guesses a vertex $r \in V(G)$ such that there exists an optimal solution that places a pebble on $r$, then roots the tree $G$ at $r$ to obtain a rooted tree $G_r$, and finally considers all the subtrees of $G_r$ in a bottom-up fashion.

	For a given subtree $T_u$ of $G_r$ rooted at the vertex $u$, let us denote by $\eta(u)$ the number of pebbles placed on $V(T_u)$ w.r.t. $\sigma$.
	When the subtree $T_u$ is examined, we consider an auxiliary problem. In this problem we want to place exactly $j \le k$ of the pebbles on the vertices of $T_u$ in order to satisfy the following properties:
	\begin{enumerate}[label=(P\arabic*),leftmargin=*]
		\item \label{it:con_P1} the subgraph of $T_u$ induced by the final positions of the pebbles must be connected;
		\item \label{it:con_P2} if $j \ge 1$ then at least one pebble must be placed on $u$.
	\end{enumerate}

	Moreover, if $j < \eta(u)$ we want to move the $\eta(u)-j$ exceeding pebbles to the parent of $u$ (and thus outside $T_u$). In a similar manner, if $j > \eta(u)$ then the $j-\eta(u)$ missing pebbles are to be moved into $T_u$ from the parent of $u$, where we assume they are initially placed. We point out that, by Lemma \ref{lemma:sum_tree_no_crossing}, we do not need to consider the case where some pebbles move into $T_u$ while others move out of $T_u$.

	\begin{figure}[t]
		\centering
		\includegraphics[scale=1.3]{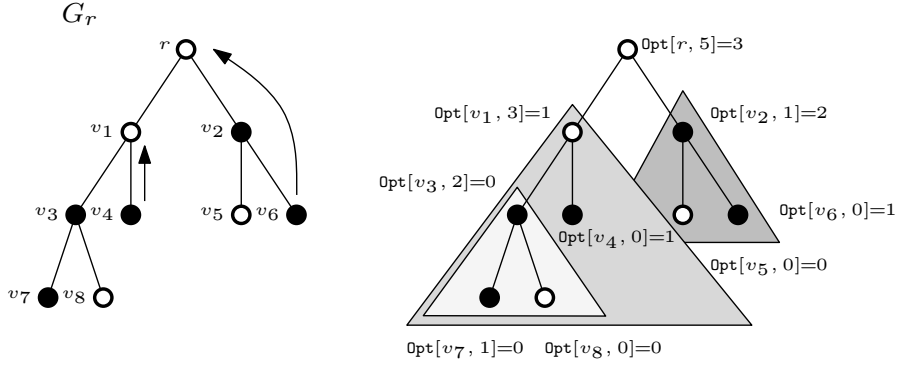}
		\caption{Left: an instance of $\con$-$\msum$ and its optimal solution. A single pebble is placed on each black vertex while the optimal movements are denoted by arrows. Right: the auxiliary problems corresponding to the optimal solution which are considered by the dynamic-programming algorithm.}
		\label{fig:mov_con_sum}
	\end{figure}

	We will denote by $\Opt[u,j]$ the cost of the optimal movement for this auxiliary problem. Notice that, in $\Opt[u,j]$, we are accounting for the cost of traversing all the edges of $T_u$ plus the edge from $u$ to its parent. To solve the original problem we need to find a solution corresponding to $\Opt[r,k]$. Clearly as $k=\eta(r)$ we do not have exceeding or missing pebbles, in this case. We now show how to combine these auxiliary problems.

	If $u$ is a leaf of $G_r$ then $\Opt[u,j]=|\eta(u) - j|$. Otherwise, if $u$ is not a leaf, we can distinguish two cases: $j=0$ and $j > 0$.
	If $j=0$ then no pebble can be placed on $u$ or in any descendant of $u$, therefore all the pebbles must first be moved towards $u$ and then to the parent of $u$. Let $v_1, \dots, v_\ell$ be the set of children of $u$ in $G_r$, we have: $\Opt[u,0] = \eta(u) + \sum_{i=1}^\ell \Opt[v_i, 0]$.
	
	Otherwise, if $j > 0$, we can move any number of pebbles between $1$ and $j$ to $u$, and place the remaining pebbles on the subtrees rooted at the children of $u$. Therefore we have:
	\[
		\Opt[u,j] = |\eta(u)-j| + \min_{ \substack{0 \le j_1, \dots, j_\ell < j \\ \sum_{i=1}^{\ell} j_i < j } } \left\{ \sum_{i=1}^{\ell} \Opt[v_i, j_i] \right\} \mbox{.}
	\]
	
	Notice that the minimum considers all the possible ways for distributing less than $j$ pebbles on the subtrees, i.e., all the vectors $(j_i)_i$ of $\ell$ elements whose sum is less than $j$.
	
	We now argue on the fact that, despite the number of such vectors can be exponential on $j$, the minimum can be found in polynomial time.
	
	This can again be done by using dynamic programming: let $\Min[i,h]$ denote the minimum cost of placing $h$ pebbles in the first $i$ subtrees. Clearly when $i=1$ we have $\Min[1, h] = \Opt[v_1, h]$, while for $i > 1$ the following holds:
	\[	
	 	\Min[i, h] = \min_{0 \le z \le h} \left\{ \Min[i-1, z] + \Opt[v_i, h-z] \right\} \mbox{.}
	\]

	\noindent 
Therefore, the equation for $\Opt[u,j]$ can be rewritten as:	
	\[
		\Opt[u,j] = |\eta(u)-j| + \min_{0 \le i < j} \Min[\ell, i] \mbox{.}
	\]	
	
	Notice how this way of distributing the pebbles is general and does not depend on the specific movement problem: in fact, it can be used every time we are interested in minimizing the cost of distributing a number of items in a set of bins if, for each bin, we incur a cost that depends on the number of items placed therein.
	
	An example of an optimal decomposition into subproblems along with the corresponding optimal solution is shown in Figure~\ref{fig:mov_con_sum}.
	
	Regarding the complexity of the algorithm, the time required to compute a specific $\Opt[u,j]$ is $O(\ell \cdot j) = O(\ell \cdot k)$. As the sum of the $\ell$-values over all the vertices is $n-1$, the time needed to compute $\Opt[v,j]$ for a fixed $j$ and all $v \in V$ is $O(n \cdot k)$. It follows that all the possible subproblems can be solved in time $O(n \cdot k^2)$.

	As we have to guess the vertex $r$, a na\"ive strategy would be repeating the above procedure $n$ times, one for each vertex of $G$. This would require an overall time of $O(n^2 \cdot k^2)$.
	We can do better by using a more sophisticated approach: consider a \emph{centroid}\footnote{A centroid of a tree is a vertex whose removal minimizes the maximum number of nodes over all the trees of the resulting forest. Notice that each tree of the forest has at most half of the vertices of the original tree, and that a centroid can be easily found in linear time.} $v$ of $G$, and notice that either there exists an optimal solution that places a pebble on $v$, or every optimal solution places all the pebbles on a single connected component of $G-v$.
	We first apply the above algorithm using $v$ as the root and then we proceed recursively on the trees of the forest $G-v$ (each of which has at most half of the vertices).
	More precisely, for every subtree $T$ of $G-v$, rooted at $v^\prime$, we recursively solve an instance consisting of the tree $T$ where all the pebbles in $V(T)$ are left unmoved and all pebbles not in $V(T)$ have been moved to $v^\prime$. This movement cost, i.e., $\sum_{p \in P : \sigma(p) \not\in V(T)} d(\sigma(p), v^\prime)$, is then added to the measure of the solution returned by the recursive call.
	Among all the computed solutions we choose the cheapest one.
	
	By doing so, we are able to reduce the computational complexity to \linebreak $O(n \cdot k^2 \log n)$. Indeed, the recurrence relation describing the running time of the algorithm is $T(n) = \sum_{n_j} T(n_j) + O(n \cdot k^2)$, where $n_j \le \frac{n}{2}$ denotes the number of vertices of the $j$-th subtree of $G-v$. Clearly, the depth of the recursion is $O(\log n)$ while the amount of work on each level of the recursion-tree is $O(n \cdot k^2)$.
	
	Once the value of the optimal solution $\mu^*$ has been found, it is not too hard to see that the optimal solution itself can be reconstructed by proceeding in a bottom-up fashion, while keeping track of both the pebbles that move out of each subtree and the position where missing pebbles are to be placed. 

	To summarize, we have the following:
	\begin{theorem}
		\label{thm:con_sum_trees}
		$\con$-$\msum$ on trees can be solved in $O(n \cdot k^2 \log n)$ time.
	\end{theorem}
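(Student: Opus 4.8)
The statement collects the algorithm and analysis developed in this subsection, so the plan is to assemble these ingredients into a clean correctness-plus-complexity argument. First I would fix an arbitrary optimal solution $\mu^*$ and observe that its set of final positions $\mu^*[P]$ induces a connected subtree of $G$; pick any vertex $r\in\mu^*[P]$. Rooting $G$ at $r$ and processing the subtrees bottom-up, I would prove by induction on $T_u$ that $\Opt[u,j]$, as defined by the displayed recurrences, equals the true optimum of the auxiliary problem on $T_u$. The base case (leaves) is immediate. For the inductive step the crucial point is Lemma~\ref{lemma:sum_tree_no_crossing}: in an optimal movement no edge is used in both directions, so across the edge joining $T_u$ to its parent, and across each edge joining $u$ to a child $v_i$, pebbles flow in one direction only. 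Hence the cost splits additively into (i) the $|\eta(u)-j|$ pebbles crossing the edge above $u$ and (ii) independent auxiliary problems on the subtrees $T_{v_i}$ with nonnegative allocations $j_i$; since the requirement that a pebble sit on $u$ pins one pebble there, the allocations satisfy $\sum_i j_i < j$, and connectivity of $\mu^*[P]\cap V(T_u)$ follows because each nonempty $T_{v_i}$-subproblem recursively places a pebble on $v_i$, which is adjacent to $u$. This reduces $\Opt[u,j]$ to minimizing $\sum_i \Opt[v_i,j_i]$ over feasible allocations, precisely the ``items into bins'' subproblem solved correctly by the $\Min[i,h]$ recurrence (again a routine induction on $i$).

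Next I would handle the guessing of $r$. The naive version runs the DP once per vertex and returns the cheapest outcome; it is correct but costs $O(n^2 k^2)$. To obtain the claimed bound I would use the centroid trick: for a centroid $v$ of $G$, either some optimal solution places a pebble on $v$ --- in which case running the DP rooted at $v$ finds it --- or every optimal solution has $\mu^*[P]$ disjoint from $v$. In the latter case, since $v$ is a cut vertex of the tree and $\mu^*[P]$ is connected, $\mu^*[P]$ lies entirely inside a single component $C$ of $G-v$; letting $v'$ be the neighbour of $v$ in $C$, every shortest path from a start vertex outside $C$ into $C$ uses the edge $(v,v')$, so the contribution of such a pebble $p$ is $d_G(\sigma(p),v')+d_G(v',\mu^*(p))$, with the first summand a constant independent of the rest of the solution. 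Therefore the optimum equals $\big(\sum_{p\,:\,\sigma(p)\notin V(C)} d_G(\sigma(p),v')\big)$ plus the optimum of the induced instance on $C$ in which those pebbles have been pre-moved to $v'$. Recursing on every component of $G-v$ (each with at most $n/2$ vertices) and returning the best of the DP-at-$v$ value and the recursive values yields a global optimum.

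For the complexity, a single entry $\Opt[u,j]$ is computed in $O(\ell\cdot j)=O(\ell\,k)$ time, where $\ell$ is the number of children of $u$; since the $\ell$-values sum to $n-1$, for a fixed root all $\Opt[\cdot,\cdot]$ entries (and the auxiliary $\Min$ tables, which add no more) take $O(n\,k^2)$ time. The centroid recursion has depth $O(\log n)$ and spends $O(n\,k^2)$ per level --- a centroid and the induced sub-instances are built in linear time --- giving $O(n\,k^2\log n)$ overall. Finally, once the values are known, an optimal $\mu^*$ is recovered by a top-down pass that at each $\Opt[u,j]$ reads off from the stored minimizers how many pebbles go to each child subtree and which pebbles cross the edge above $u$, incurring no extra asymptotic cost.

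I expect the main obstacle to be the argument of the second paragraph: rigorously justifying that restricting attention to optimal solutions placing a pebble on the current root loses nothing, and that the centroid recursion --- which irreversibly commits the pebbles to one component --- still reaches a global optimum. This rests on the cut-vertex property of connected target sets in a tree together with the observation that pre-moving the ``outside'' pebbles to $v'$ perturbs the objective only by an additive constant that does not interact with the recursive choices.
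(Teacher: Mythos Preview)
Your proposal is correct and follows essentially the same approach as the paper: the bottom-up DP with the $\Opt[u,j]$/$\Min[i,h]$ recurrences justified via Lemma~\ref{lemma:sum_tree_no_crossing}, the same $O(n k^2)$ cost per root, and the same centroid-decomposition recursion (pre-moving outside pebbles to the neighbour $v'$) to bring the root-guessing down to $O(\log n)$ levels. If anything, you supply a slightly more explicit correctness argument for the centroid step --- that connectedness of $\mu^*[P]$ in a tree forces it into a single component of $G-v$, and that pre-moving outside pebbles to $v'$ shifts the objective by an additive constant --- than the paper spells out.
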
 %

	\subsection{Solving $\con$-$\mnum$ on trees}

	The algorithm is similar to the one for $\con$-$\msum$: we guess a vertex $r \in V(G)$ such that there exists an optimal solution that places a pebble on $r$, then we root the tree $G$ at $r$ (call $G_r$ the rooted tree) and we consider all the subtrees of $G_r$ in a bottom-up fashion.

	Let $\varphi(u)=|\{ p \in P : \sigma(p)=u \}|$ be the number of pebbles whose initial position is the vertex $u \in V(G)$.

	As before, when the subtree $T_u$ (rooted at the vertex $u$) of $G_r$ is examined, we consider an auxiliary problem where we want to place exactly $j \le k$ pebbles on the vertices of $T_u$ in order to satisfy the properties \ref{it:con_P1} and \ref{it:con_P2}.

	We will measure the cost of a solution for this auxiliary problem by examining the number of pebbles placed on each vertex of $T_u$. Removing pebbles from a vertex costs nothing, while placing a pebble on a vertex costs $1$ if it comes from a different vertex.	A way to visualize this auxiliary problem is to imagine the tree $T_u$ where no pebbles have been placed and	a pool of $j$ pebbles to be distributed on its vertices. Each vertex $v$ of $T_u$ can hold up to $\varphi(v)$ pebbles for free, while each additional pebble placed on $v$ increases the overall cost by $1$.

	We will denote by $\Opt[u,j]$ the cost of the optimal movement for this auxiliary problem. To solve the original problem we need to find a solution corresponding to $\Opt[r,k]$. We now show how to combine these auxiliary problems.

	If $u$ is a leaf of $G_r$ then we have:
	\[
		\Opt[u,j] =
		\begin{cases}
			0 		& \mbox{if } j \le \varphi(u); \\
			j-\varphi(u) 	& \mbox{if } j > \varphi(u).
		\end{cases}	
	\]

	Otherwise, if $u$ is not a leaf, we can either place some pebbles on $u$ and the others on the subtrees rooted at its children ($j \ge 1$), or place no pebble at all in the whole subtree rooted at $u$ ($j=0$). We call $z$ the number of pebbles that are to be placed on $u$.
	
	If $j=0$ we have $\Opt[u,j] = 0$, otherwise:
	\[
		\Opt[u,j] = \min_{1 \le z \le j}\left\{ \max \{z-\varphi(u), 0\} + \min_{\substack{0 \le j_1, \dots, j_{\ell} \le j-z \\ \sum_{i=1}^{\ell} j_i = j-z }} \left\{ \sum_{i=1}^{\ell} \Opt[v_i, j_i] \right\} \right\}
	\]
	where $v_1, \dots, v_\ell$ are the children of $u$ in $G_r$.
	
	As before, using the already shown dynamic-programming approach to optimally distributing the pebbles on the subtrees, we can find the values of $\Opt[u,j]$ for a fixed $j$ and all $v \in V$, in $O(n \cdot k)$ time. Therefore the time required to compute all $\Opt$ values for a single root $r$ is $O(n \cdot k^2)$ and the measure of the best solution is found in $\Opt[r, k]$.
	As for $\con$-$\msum$, it is not necessary to run the algorithm for all roots $r \in V(G)$ but we can choose a centroid $v$ of $G$ as starting root and then proceed recursively on the trees of $G-v$.

	To summarize, we have the following:
	\begin{theorem}
		\label{thm:con_num_trees}
		$\con$-$\mnum$ on trees can be solved in $O(n \cdot k^2 \log n)$ time.
	\end{theorem}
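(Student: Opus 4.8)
\textit{Proof proposal.} The plan is to prove four things, in order: (i) that the cost of a feasible solution is determined by the multiset $\mu[P]$ of end positions and splits as a sum of per-vertex contributions; (ii) correctness of the recurrences defining $\Opt[u,j]$ and of the nested ``distribute pebbles into subtrees'' computation; (iii) the $O(n\cdot k^2)$ running time for a fixed root, together with solution reconstruction; and (iv) correctness and running time of the centroid-based speed-up. Step (i) is the enabling observation: if $m(v)$ denotes the number of pebbles a solution places on $v$, then the minimum number of pebbles that must be moved to realise the end multiset $m$ from the start configuration $\sigma$ is exactly $\sum_{v\in V(G)}\max\{m(v)-\varphi(v),0\}=k-\sum_{v}\min\{\varphi(v),m(v)\}$; one keeps on each $v$ as many of the $\varphi(v)$ resident pebbles as there are slots, and a one-swap exchange argument shows no matching of start to end positions does better. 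Hence $c(\mu)$ is a sum of contributions, each localised at a single vertex, which is precisely what makes bottom-up composition along subtrees legitimate.

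Next I would prove, by induction on the height of $T_u$ in $G_r$, that $\Opt[u,j]$ as defined by the displayed recurrences equals the optimum of the auxiliary problem for $T_u$. The leaf case is immediate. For an internal $u$ with children $v_1,\dots,v_\ell$ and $j\ge 1$: in any feasible placement of $j$ pebbles obeying (P1) and (P2), some $z\ge 1$ of them lie on $u$, and for each $i$ the pebbles inside $T_{v_i}$ form a set that is connected and — having to reach $u$ only through $v_i$ — is either empty or contains $v_i$, i.e.\ a feasible instance of the subproblem with value $\Opt[v_i,j_i]$, where $\sum_i j_i=j-z$; conversely, any choice of $z$ and of a distribution $(j_i)_i$ with $\sum_i j_i=j-z$ recombines into a feasible placement. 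By the additivity of step (i) and the inductive hypothesis, the cost of such a placement is $\max\{z-\varphi(u),0\}+\sum_i\Opt[v_i,j_i]$; minimising over $z$ and over the distributions yields the recurrence, and the $j=0$ case is trivial. The inner minimisation over distributions $(j_i)_i$ is the very same ``pool of items into bins at a per-bin cost'' subproblem already handled for $\con$-$\msum$ through the table $\Min[\cdot,\cdot]$, whose correctness carries over unchanged because it never uses the form of the per-bin cost. Since $\Opt[r,k]$ is the value of an optimal solution that places a pebble on $r$, and the assignment itself is recovered by a standard back-pointer bottom-up pass, this settles correctness for a fixed root.

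For the running time with a fixed root, computing $\Min$ at a vertex with $\ell$ children costs $O(\ell\cdot k^2)$, and $\sum_u\ell_u=n-1$, so all $\Opt[\cdot,\cdot]$ values cost $O(n\cdot k^2)$; trying every root naively gives $O(n^2\cdot k^2)$. To reach the claimed bound I would use the centroid argument: for a centroid $v$ of $G$, either some optimal solution places a pebble on $v$ — found by running the rooted algorithm at $v$ — or, since the subgraph induced by the end positions is connected, non-empty and avoids $v$, every optimal solution confines all pebbles to a single tree $T$ of the forest $G-v$, attached to $v$ at a vertex $v'$; one then recurses on each such $T$, on the sub-instance in which every pebble starting outside $T$ is first brought to $v'$, adding back the cost this incurs. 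As each $T$ has at most $n/2$ vertices, the recursion has depth $O(\log n)$ and does $O(n\cdot k^2)$ work per level, giving $O(n\cdot k^2\log n)$ overall; reconstruction threads through the recursion in the obvious way.

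The step I expect to require the most care is making this last recursion precise \emph{for $\mnum$}. Unlike $\con$-$\msum$, where the distance of an out-of-$T$ pebble decomposes additively as $d_G(\sigma(p),v')+d_G(v',\mu(p))$ through the attachment vertex, so that relocating it to $v'$ changes the objective by a fixed additive term, the $\mnum$-cost of such a pebble is a flat $1$ that must be charged exactly once; so one must argue that the sub-instance on $T$ — with the pebbles starting outside $T$ relocated to $v'$ and a corresponding fixed term added — has the same optimum as the restriction of the original instance to solutions confined to $T$, equivalently that treating those relocated pebbles as a free supply sitting on $v'$ neither helps nor hurts. I would verify this by the same per-vertex accounting as in step (i), setting up the sub-instance so that each relocated pebble contributes exactly once; I expect this bookkeeping to be the only delicate point, the rest being routine.
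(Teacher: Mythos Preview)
Your outline matches the paper's approach: the same per-vertex decomposition of the $\mnum$ cost, the same recurrences for $\Opt[u,j]$ and the inner $\Min$ table, and the same centroid recursion to avoid trying all roots. You are also right that the centroid step is the one place that needs care for $\mnum$; the paper simply writes ``as for $\con$-$\msum$'' and moves on.

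However, the fix you sketch --- relocate the $q$ out-of-$T$ pebbles to the attachment vertex $v'$ and add a fixed term --- is the $\con$-$\msum$ recipe and it does \emph{not} carry over. With $\varphi'(v')=\varphi(v')+q$ the sub-instance charges $k-\sum_{v}\min\{\varphi'(v),m(v)\}$ for an end-multiset $m$; adding $q$ gives $q+k-\sum_{v}\min\{\varphi'(v),m(v)\}$, whereas the true cost of a solution confined to $T$ is $k-\sum_{v}\min\{\varphi(v),m(v)\}$. These differ whenever the optimal configuration puts fewer than $\varphi(v')+q$ pebbles on $v'$. Concretely, on a three-vertex path with vertices $v',w,x$, one inside pebble at $x$, one outside pebble, and $k=2$, the true restricted optimum is $1$ (keep the pebble at $x$, move the outside pebble in), but relocate-and-add reports $1+1=2$.

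The correct setup is simpler than any ``free supply'' bookkeeping: do not relocate at all. Recurse on $T$ with the original $\varphi$ restricted to $T$ (so $\sum_{v\in T}\varphi(v)=k-q$), still with target $k$, and add nothing. The recurrences for $\Opt[u,j]$ never require $\sum_v\varphi(v)=j$; they charge $\max\{m(v)-\varphi(v),0\}$ per vertex, and $\sum_{v\in T}\max\{m(v)-\varphi(v),0\}=k-\sum_{v\in T}\min\{\varphi(v),m(v)\}$ is exactly the original $\mnum$ cost of any configuration confined to $T$, since every pebble starting outside $T$ necessarily moves. With this one correction your argument goes through and yields the stated $O(n\cdot k^2\log n)$ bound.
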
 %

	\section{Independency motion problems}
	\label{sec:ind}

	In this section we focus on independency motion problems. First, we give a better characterization of the hardness of $\ind$-$\mmax$ and $\ind$-$\msum$ (depending on the input graph), and then we show some positive results for our considered variants on paths and trees. Since if $k \geq n$ there is no feasible solution, we will consider only instances where $k < n$.

	\subsection{Hardness of $\ind$-$\mmax$ and $\ind$-$\msum$ on bipartite graphs}
	As we already pointed out, for independency problems on general graphs it is \NP-hard even to find any feasible solution since it would require to find an independent set of size at least $k$. Nevertheless, one may wonder whether independency motion problems are tractable on instances on which a maximum independent set can be found in polynomial time. We provide a negative answer to this question, at least for $\ind$-$\mmax$ and $\ind$-$\msum$, by showing the following 

	\FloatBarrier
	\begin{figure}[t]
		\centering
		\includegraphics[scale=1.3]{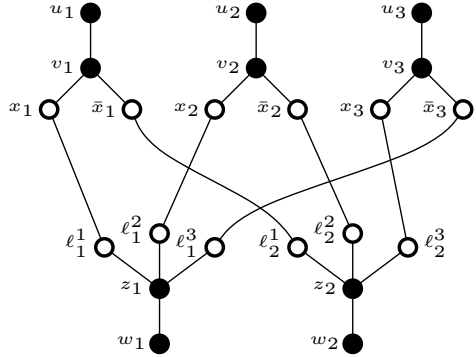}
		\caption{Instance of $\ind$-$\mmax$ and of $\ind$-$\msum$ corresponding to the formula $(\bar{x}_1 \vee \bar{x}_2 \vee x_3) \wedge (x_1 \vee x_2 \vee \bar{x}_3)$. Pebbles are placed on black vertices.}
		\label{fig:mov_ind_max}
	\end{figure}

	\begin{theorem}
		\label{thm:ind_max_sum_hard_bipartite}
		$\ind$-$\mmax$ and $\ind$-$\msum$ are \NP-hard on bipartite graphs.
	\end{theorem}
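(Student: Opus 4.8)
The plan is a reduction from $3$-SAT. Let $\phi$ be a $3$-CNF formula with variables $x_1,\dots,x_n$ and clauses $C_1,\dots,C_m$. In polynomial time I will build a bipartite graph $G$ and a start placement $\sigma$ of $k=2n+2m$ pebbles such that $\phi$ is satisfiable if and only if the instance admits a solution with maximum movement $1$ (and, simultaneously, one with overall movement $n+m$), whereas if $\phi$ is unsatisfiable every feasible solution has maximum movement at least $2$ and overall movement at least $n+m+1$. Since the number of pebbles is smaller than $|V(G)|$ and $\sigma[P]$ is never feasible (so the optimum is always positive), this single construction proves that both $\ind$-$\mmax$ and $\ind$-$\msum$ are \NP-hard on bipartite graphs.

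The construction uses two gadgets. For each variable $x_i$ a \emph{selection gadget}: a vertex $v_i$ carrying a pebble, a leaf $g_i$ adjacent only to $v_i$ and carrying a pebble, and two vertices $T_i,F_i$, each adjacent to $v_i$. Because the two pebbles start on the adjacent vertices $v_i,g_i$ and $g_i$ has no other neighbour, a brief case analysis shows that the only way to keep the maximum movement equal to $1$ is to leave the $g_i$-pebble on $g_i$ and to move the $v_i$-pebble, at distance exactly $1$, onto $T_i$ or onto $F_i$; I read this as the truth value of $x_i$ (true iff the $v_i$-pebble reaches $T_i$). For each clause $C_j=(\ell_{j,1}\vee\ell_{j,2}\vee\ell_{j,3})$ a \emph{test gadget}: a vertex $q_j$ carrying a pebble, a leaf $h_j$ adjacent only to $q_j$ and carrying a pebble, and three vertices $r_{j,1},r_{j,2},r_{j,3}$ adjacent to $q_j$; to make every instance feasible, also an \emph{escape} vertex $e_j$ joined to $q_j$ by a path of length $2$ through a new vertex $e_j'$, with no pebble on $e_j'$ or $e_j$. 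The same argument forces the $h_j$-pebble to stay and the $q_j$-pebble to move, at distance $1$, onto some $r_{j,s}$ (reaching $e_j$ costs $2$). Finally I wire the gadgets: for each literal occurrence, if $\ell_{j,s}=x_i$ add the edge $\{r_{j,s},F_i\}$, and if $\ell_{j,s}=\bar x_i$ add the edge $\{r_{j,s},T_i\}$. With $A=\{v_i\}_i\cup\{h_j,e_j'\}_j\cup\{r_{j,s}\}_{j,s}$ and $B=\{g_i,T_i,F_i\}_i\cup\{q_j,e_j\}_j$, every edge of $G$ runs between $A$ and $B$, so $G$ is bipartite.

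For correctness, from a solution with maximum movement $1$ the selection gadgets read off a truth assignment, and for each clause the vertex $r_{j,s}$ reached by the $q_j$-pebble must be non-adjacent to every occupied vertex; its only possibly-occupied neighbour is the $T_i$ (resp.\ $F_i$) it is wired to, and that vertex is empty exactly when $\ell_{j,s}$ is satisfied — hence every clause holds. Conversely, a satisfying assignment yields a solution with maximum movement $1$ by moving each $v_i$-pebble according to the assignment and each $q_j$-pebble onto some $r_{j,s}$ with $\ell_{j,s}$ satisfied; a direct check shows the occupied set has size $k$ and is independent, the point being that $F_i$ is occupied only when $x_i$ is false, in which case no literal-vertex wired to $F_i$ is used (symmetrically for $T_i$). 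This same canonical solution has overall movement exactly $n+m$. For the converse under $\msum$, in any feasible solution at least one of the two (gadget-private, and initially adjacent) pebbles of each of the $n+m$ gadgets must move, so $\msum\ge n+m$ always; and if $\msum=n+m$ then exactly $n+m$ pebbles move, one per gadget and each by distance exactly $1$, which — by the leaf argument above — forces precisely the canonical moves, hence again a satisfying assignment. Thus $\phi$ unsatisfiable implies $\msum\ge n+m+1$, and deciding $\msum\le n+m$ is \NP-hard.

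The main obstacle is obtaining all three properties at once: bipartiteness, a clean binary choice enforced purely by a bound of $1$ on the movement, and a clause test whose ``blocking'' relation lives at distance $1$. The naive way of wiring $r_{j,s}$ into a variable gadget places $r_{j,s}$ on the wrong side of the bipartition, and subdividing the offending edge to fix this would push the blocking relation to distance $2$ and destroy the reduction; the resolution is to choose which endpoints of each gadget carry pebbles so that all literal-vertices $r_{j,s}$ end up on the same side of the bipartition as the vertices $v_i$. What remains is bookkeeping: checking that no unintended adjacency among the $r_{j,s}$'s and the $T_i$'s/$F_i$'s invalidates either direction, that the escape paths make every instance feasible without enabling a cheap ``fake'' solution, and that $\msum=n+m$ admits only canonical solutions.
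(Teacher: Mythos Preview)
Your gadgets and wiring are essentially identical to the paper's construction (the paper uses stars with a pebble on the centre and on one distinguished leaf, and wires the literal-leaves to the ``opposite'' truth-value leaf of the variable star), so the overall approach is right. However, the escape path you add to each clause gadget breaks the reduction.

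Concretely, the intermediate vertex $e_j'$ is at distance exactly $1$ from $q_j$, and its only neighbours are $q_j$ and $e_j$, neither of which carries a pebble after the move. Hence, regardless of whether $\phi$ is satisfiable, you can always move every $v_i$-pebble to (say) $T_i$ and every $q_j$-pebble to $e_j'$, leaving the $g_i$- and $h_j$-pebbles in place. The resulting set $\{g_i,T_i\}_i\cup\{h_j,e_j'\}_j$ has size $2n+2m$, is independent (check: every listed vertex is adjacent only to empty vertices), and is reached with maximum movement $1$ and total movement $n+m$. So both decision questions answer ``yes'' on every instance, and the backward direction fails. Your own checklist flags exactly this danger (``without enabling a cheap `fake' solution''), but the check is not carried out, and it does not pass.

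The fix is simply to drop the escape path: it is not needed for feasibility. Your graph without it is bipartite with parts of sizes $n+4m$ and $3n+m$, and since each part is an independent set, $\alpha(G)\ge\max(n+4m,\,3n+m)\ge 2n+2m$, so a feasible placement always exists. With the escape path removed, your argument (and the paper's) goes through unchanged.
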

	\begin{proof}
		We will show a polynomial reduction from $\textsc{3-Sat}$ to the decisional versions of $\ind$-$\mmax$ and $\ind$-$\msum$.
			Recall that $\textsc{3-Sat}$ is the problem of deciding whether a formula $f$ in conjunctive normal form with three literals per clause is satisfiable. Let $X = \{ x_1, \dots, x_\tau \}$ be the set containing the variables of $f$, and let $m$ be the number of clauses. We will denote the $i$-th literal of the $j$-th clause with $\ell^i_j$.
			
		Given an instance $f$ for $\textsc{3-Sat}$ we construct an instance for $\ind$-$\mmax$ and for $\ind$-$\msum$ in the following manner:
	\begin{itemize}
		\item For each variable $x_i \in X$ create a star with $3$ leaves labelled $u_i, x_i, \bar{x}_i$ and label the internal node $v_i$. Place one pebble on $u_i$ and one on $v_i$.
			
		\item For each clause $(\ell^1_j \vee \ell^2_j \vee \ell^3_j)$ create a star with $4$ leaves labelled $\ell^1_j, \ell^2_j, \ell^3_j, w_j$ and label the internal node $z_j$. Place one pebble on $w_j$ and one on $z_j$.
			
		\item For each literal $\ell^i_j$ of $f$ let $x_s$ be the corresponding variable; then, if $\ell^i_j$ is asserted add an edge between the two nodes labelled $\ell^i_j$ and $\bar{x}_s$, otherwise add an edge between the two nodes labelled $\ell^i_j$ and $x_s$.
	\end{itemize}					
		
	Let $G$ be the resulting graph, $P$ the set containing the $2 \cdot (\tau+m)$ placed pebbles, and $\sigma$ the function that maps each pebble to its starting position (see Figure \ref{fig:mov_ind_max}).
	Notice that $G$ is bipartite as we can partition the vertices into two sets
	$A=\{v_i : 1 \le i \le \tau\} \cup \{ \ell^i_j : 1 \le i \le 3 \wedge 1 \le j \le m \} \cup \{ w_j : 1 \le j \le m \}$ and $B=V(G) \setminus A$ such that no edge of $G$ has both its endpoints in the same set.

	We claim that there exists an assignment that satisfies $f$ if and only if the optimal solution $\mu$ for the instance of $\ind$-$\mmax$ (resp., $\ind$-$\msum$) has measure at most $1$ (resp., $\tau + m$).
		
	Suppose the existence of an assignment $\theta^* : X \to \{ \True, \False \}$ that satisfies $f$. Then we move a pebble starting on vertex $v_i$ to the vertex labelled $x_i$ if $\theta^*(x_i)=\True$, or to the vertex labelled $\bar{x}_i$ if $\theta^*(x_i)=\False$. Moreover, for each clause $(\ell^1_j \vee \ell^2_j \vee \ell^3_j)$ of $f$ at least one literal $\ell^i_j$ must be true w.r.t. $\theta^*$. This implies that the vertex labelled $\ell^i_j$ is adjacent only to $z_j$ and to a vertex $x_s$ where no pebble has been placed. We then move the pebble initially placed on vertex $z_j$ to the vertex $\ell^i_j$.

The resulting configuration of pebbles is an independent set for $G$ and each pebble has been moved to a node adjacent to its starting position. This implies that the maximum movement is $1$ and that the overall distance travelled by pebbles is $\tau + m$.
		
	Conversely, suppose the existence of an optimal solution $\mu^*$ for the instance of $\ind$-$\mmax$ (resp., $\ind$-$\msum$) that has measure equal to $1$ (resp., $\tau+m$).
	Notice that for every pebble $p$ initially placed on a vertex labelled $v_i$, $\mu^*(p) \in \{ x_i, \bar{x}_i \}$ must hold, otherwise either $\mu^*$ would be unfeasible or $c(\mu^*)$ would be greater than $1$ (resp., $\tau+m$, since $\tau+m$ pebbles need to be moved). Similarly for every pebble $p$ initially placed on a vertex labelled $z_j$, $\mu^*(p) \in \{ \ell^j_1, \ell^j_2, \ell^j_3 \}$ must hold.
		
	We construct an assignment $\theta$ for the variables in the following manner: if there is a pebble on the vertex labelled $x_i$ we set $\theta(x_i)= \mathrm{\True}$, otherwise there must be a pebble on $\bar{x}_i$ and we set $\theta(x_i) = \False$.

	We now show that $\theta$ is, indeed, an assignment that satisfies $f$. 		
	For each clause	$(\ell^1_j \vee \ell^2_j \vee \ell^3_j)$ the pebble placed on $z_j$ has been moved to a vertex $\ell^i_j$ corresponding to one of the three literals.
	This implies that no pebble has been moved to the unique vertex in $\cup_{i=1}^t\{x_i, \bar{x}_i\}$ that is adjacent to vertex $\ell^i_j$.
	The above implies that the variable corresponding to literal $\ell^j_i$ has been set to the value that satisfies $\ell^j_i$ and thus the whole clause is satisfied.
	\qed\end{proof}
	
Apparently, the above technique cannot be straightforwardly adapted to $\ind$-$\mnum$, and so we leave this as an interesting open problem.

	\subsection{Approximability of $\ind$-$\mmax$}
	Actually, as shown in Theorem \ref{thm:ind_max_sum_hard_bipartite}, $\ind$-$\mmax$ is hard already when the cost of an optimal solution is 1. This immediately implies the following:
	\begin{corollary}
		\label{cor:ind_max_2-inapx}
		$\ind$-$\mmax$ on bipartite graphs is not approximable in polynomial time within a factor of $2-\epsilon$ for any positive $\epsilon$, unless $\P=\NP$.
	\end{corollary}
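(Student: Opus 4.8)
The plan is to derive the corollary directly from the reduction used in the proof of Theorem~\ref{thm:ind_max_sum_hard_bipartite}. The first step is to pin down the exact value of the optimum of the $\ind$-$\mmax$ instance $G,P,\sigma$ built from a $\textsc{3-Sat}$ formula $f$: I claim it is a nonnegative integer that equals $1$ when $f$ is satisfiable and is at least $2$ when $f$ is unsatisfiable. The bound ``optimum $\le 1$ when $f$ is satisfiable'' is precisely the forward implication already established in that proof, where the exhibited feasible solution moves every pebble along a single edge. The optimum cannot be $0$, because for each variable $x_i$ the two pebbles initially sitting on the adjacent vertices $u_i$ and $v_i$ do not form an independent set, so at least one pebble must move; hence the optimum is exactly $1$ in this case. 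Finally, ``optimum $\ge 2$ when $f$ is unsatisfiable'' is the contrapositive of the converse implication of Theorem~\ref{thm:ind_max_sum_hard_bipartite}: a feasible solution of measure $\le 1$ would yield a satisfying assignment for $f$.

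The second step is the standard gap argument. Suppose, for contradiction, that there is a polynomial-time algorithm $\mathcal{A}$ approximating $\ind$-$\mmax$ on bipartite graphs within a factor $2-\epsilon$ for some $\epsilon>0$. Given any $\textsc{3-Sat}$ formula $f$, construct the bipartite instance of the reduction (which has size polynomial in $|f|$) and run $\mathcal{A}$ on it. If $f$ is satisfiable the optimum is $1$, so $\mathcal{A}$ returns a solution of cost at most $2-\epsilon<2$; since the $\mmax$ measure is a nonnegative integer, the returned cost must be exactly $1$. If $f$ is unsatisfiable, every feasible solution, and in particular the one returned by $\mathcal{A}$, has cost at least $2$. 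Thus $\mathcal{A}$ reports cost $1$ if and only if $f$ is satisfiable, which gives a polynomial-time decision procedure for $\textsc{3-Sat}$ and contradicts $\P\neq\NP$.

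There is essentially no real obstacle here: the corollary is a one-line consequence of Theorem~\ref{thm:ind_max_sum_hard_bipartite}. The only two points deserving a sentence of care are ruling out an optimum of $0$ in the satisfiable case (handled by the adjacency of $u_i$ and $v_i$ in each variable gadget) and invoking the integrality of the $\mmax$ measure, which is what turns the multiplicative ratio strictly below $2$ between optimum $1$ and any unsatisfiable optimum $\ge 2$ into a genuine decision gap.
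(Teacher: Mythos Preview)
Your proposal is correct and follows exactly the approach the paper intends: the corollary is stated with no separate proof, merely the remark that the reduction of Theorem~\ref{thm:ind_max_sum_hard_bipartite} is hard already when the optimum equals $1$, and your write-up simply spells out this gap argument carefully (including the observation that the optimum cannot be $0$ because $u_i$ and $v_i$ are adjacent, and the integrality of $\mmax$).
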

	
	We now show that this bound is tight, by providing a polynomial-time solution, which is optimal unless an additive term of $1$, to $\ind$-$\mmax$ on any class of graphs where a maximum independent set can be found in polynomial time, e.g., perfect graphs (which include bipartite graphs), interval graphs, and claw-free graphs.

	Given a graph $H$ and a subset of vertices $A \subseteq V(H)$ we will denote the open neighbourhood of $A$ by $N_H(A)=\left\{ v \in V(H) : \exists u \in A \mbox{ s.t. } (u,v) \in E(H) \right\}$. Moreover we will denote the closed neighbourhood of $A$ by  $N_H[A]=A \cup N_H(A)$.
	
	Let $U^*$ be a maximum independent set of $G$, the following lemma holds:
	\begin{lemma}
		\label{lemma:independent_set_neighborhood}
		For each independent set $U$ of $G$ it is true that $|U^* \cap N_G[U]| \ge |U|$.
	\end{lemma}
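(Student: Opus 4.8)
The plan is to argue by contradiction using an exchange argument: if $U^*$ fails to meet $N_G[U]$ in at least $|U|$ vertices, then we can replace the part of $U^*$ that lies inside $N_G[U]$ by $U$ itself and obtain a strictly larger independent set, contradicting the maximality of $U^*$.

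Concretely, suppose for contradiction that $|U^* \cap N_G[U]| < |U|$, and consider the set
\[
	U' = U \cup \bigl( U^* \setminus N_G[U] \bigr).
\]
First I would check that $U'$ is an independent set of $G$. Since $U \subseteq N_G[U]$, the two sets $U$ and $U^* \setminus N_G[U]$ are disjoint. There is no edge with both endpoints in $U$ (it is independent) and no edge with both endpoints in $U^* \setminus N_G[U]$ (it is a subset of the independent set $U^*$). Finally, a vertex $v \in U^* \setminus N_G[U]$ cannot be adjacent to any vertex of $U$: by definition of the closed neighbourhood, any vertex adjacent to some vertex of $U$ belongs to $N_G(U) \subseteq N_G[U]$, and $v \notin N_G[U]$. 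Hence $U'$ is independent.

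Then I would simply count: by disjointness,
\[
	|U'| = |U| + |U^* \setminus N_G[U]| = |U| + |U^*| - |U^* \cap N_G[U]| > |U^*|,
\]
where the last inequality uses the assumption $|U^* \cap N_G[U]| < |U|$. This contradicts the fact that $U^*$ is a maximum independent set of $G$, and therefore $|U^* \cap N_G[U]| \ge |U|$, as claimed.

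I do not expect any real obstacle here; the only point requiring a little care is the bookkeeping around the closed neighbourhood — making sure that $U$ and $U^* \setminus N_G[U]$ are disjoint and that the definition of $N_G[U]$ indeed rules out edges between these two parts. Everything else is a one-line cardinality computation.
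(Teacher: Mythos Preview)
Your proof is correct and follows exactly the same exchange argument as the paper: assume $|U^* \cap N_G[U]| < |U|$, form $U' = (U^* \setminus N_G[U]) \cup U$, observe it is independent, and derive $|U'| > |U^*|$ for a contradiction. You have simply spelled out in more detail the verification that $U'$ is independent and the cardinality computation that the paper leaves implicit.
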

	\begin{proof}
		By contradiction, let $|U^* \cap N_G[U]| < |U|$ then $U^\prime = \left( U^* \setminus N_G[U] \right) \cup U = \left( U^* \setminus \left( U^* \cap N_G[U] \right) \right) \cup U$ is an independent set of $G$ and $|U^\prime| > |U^*|$.
	\qed\end{proof}

	To prove the next lemma we use the following well known result:
	\begin{theorem}[Hall's Matching Theorem \cite{hall1935representatives}]
		\label{thm:hall_s_matching}
		Let $H=(V_1 \cup V_2, E)$ be a bipartite graph. There exists a matching of size $|V_1|$ on $H$ iff $|A| \le |N_H(A)|, \linebreak \forall A \subseteq V_1$.
	\end{theorem}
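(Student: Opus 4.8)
The plan is to prove the two implications of the equivalence separately: the necessity of Hall's condition is immediate, and its sufficiency I would establish by induction on $|V_1|$.

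The necessity direction I would dispatch first. If $M$ is a matching of size $|V_1|$, then $M$ saturates every vertex of $V_1$, and for any $A\subseteq V_1$ the $M$-partners of the vertices of $A$ are $|A|$ distinct vertices of $V_2$ lying inside $N_H(A)$; hence $|N_H(A)|\ge |A|$.

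For sufficiency I would assume $|N_H(A)|\ge|A|$ for every $A\subseteq V_1$ and induct on $|V_1|$; the base cases $|V_1|\le 1$ are trivial (the empty matching, resp.\ the single edge guaranteed by applying the hypothesis to $A=V_1$). In the inductive step I would split according to whether Hall's inequality is ever tight on a nonempty proper subset of $V_1$. \emph{Slack case:} if $|N_H(A)|\ge|A|+1$ for every nonempty $A\subsetneq V_1$, I would pick an arbitrary edge $(u,v)$ with $u\in V_1$, delete $u$ and $v$, observe that every $A\subseteq V_1\setminus\{u\}$ loses at most the neighbour $v$ and hence still satisfies $|N_H(A)|\ge|A|$, apply the induction hypothesis to obtain a matching of $V_1\setminus\{u\}$, and add the edge $(u,v)$. \emph{Tight case:} if some nonempty $A_0\subsetneq V_1$ satisfies $|N_H(A_0)|=|A_0|$, I would apply induction to the subgraph induced on $A_0\cup N_H(A_0)$ — which satisfies Hall's condition and is strictly smaller — obtaining a matching $M_1$ saturating $A_0$; then I would show that the subgraph induced on $(V_1\setminus A_0)\cup(V_2\setminus N_H(A_0))$ also satisfies Hall's condition, apply induction again to get a matching $M_2$ saturating $V_1\setminus A_0$, and output $M_1\cup M_2$.

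The main obstacle will be verifying, in the tight case, that the leftover bipartite graph inherits Hall's condition: for $A\subseteq V_1\setminus A_0$ one applies the global inequality to $A\cup A_0$ and must argue that the neighbours of $A$ lying outside $N_H(A_0)$ number at least $|A|$; this works because $N_H(A\cup A_0)$ is the disjoint union of $N_H(A_0)$ and those outside neighbours, so subtracting $|N_H(A_0)|=|A_0|$ is clean. An alternative I would keep in reserve, which avoids the case analysis entirely, is to derive the statement from the integral max-flow/min-cut theorem: attach a source $s$ joined to all of $V_1$ and a sink $t$ joined from all of $V_2$, orient the original edges from $V_1$ to $V_2$ with infinite capacity and all new arcs with unit capacity, and note that any $s$-$t$ cut of value $<|V_1|$ would isolate a set $A\subseteq V_1$ with $|N_H(A)|<|A|$, contradicting Hall's condition; integrality of a maximum flow of value $|V_1|$ then yields the matching. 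Since this is a classical textbook result I would ultimately simply cite it, but either argument above supplies a complete proof.
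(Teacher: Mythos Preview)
The paper does not prove this statement at all: it is quoted as a classical result with a citation to Hall's original 1935 paper and then used as a black box in the proof of Lemma~\ref{lemma:injective_function}. Your proposal is mathematically correct --- the induction with the slack/tight case split is the standard textbook argument, and your verification of Hall's condition on the leftover graph in the tight case is exactly right --- but there is nothing in the paper to compare it against. Your closing remark that you ``would ultimately simply cite it'' is precisely what the paper does.
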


	\begin{lemma}
		\label{lemma:injective_function}
		For each independent set $U$ of $G$, there exists an injective function $f : U \to U^*$ such that $d_G(u,f(u)) \le 1$.
	\end{lemma}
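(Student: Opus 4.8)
The plan is to build an auxiliary bipartite graph and invoke Hall's Matching Theorem (Theorem~\ref{thm:hall_s_matching}). First I would define $H=(U \cup U^*, E_H)$, where a vertex $u \in U$ is joined to a vertex $v \in U^*$ precisely when $d_G(u,v) \le 1$, i.e., when $u=v$ or $(u,v) \in E(G)$. Any matching of $H$ that saturates every vertex of the side $U$ then yields the desired function: set $f(u)$ to be the vertex of $U^*$ matched to $u$. By construction $d_G(u,f(u)) \le 1$, and $f$ is injective because distinct vertices of $U$ are matched to distinct vertices of $U^*$.

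To produce such a matching it suffices to check Hall's condition on the side $U$, namely $|A| \le |N_H(A)|$ for every $A \subseteq U$. The key observation is that $N_H(A) = U^* \cap N_G[A]$: a vertex $v \in U^*$ is adjacent in $H$ to some $u \in A$ if and only if $v = u$ or $(u,v)\in E(G)$ for some $u \in A$, i.e., if and only if $v \in N_G[A]$. Since $A \subseteq U$ and $U$ is an independent set, $A$ is itself an independent set of $G$, so Lemma~\ref{lemma:independent_set_neighborhood} applies with $U$ replaced by $A$ and gives $|U^* \cap N_G[A]| \ge |A|$. Hence $|N_H(A)| = |U^* \cap N_G[A]| \ge |A|$, and Hall's condition is satisfied.

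By Theorem~\ref{thm:hall_s_matching}, $H$ admits a matching of size $|U|$, and interpreting this matching as a function on $U$ gives the claimed injective $f$ with $d_G(u,f(u)) \le 1$. There is no genuine obstacle in this argument; the only steps requiring a little care are the identity $N_H(A) = U^* \cap N_G[A]$ and the (easy) remark that every subset of the independent set $U$ is again independent, which is exactly what makes Lemma~\ref{lemma:independent_set_neighborhood} usable here.
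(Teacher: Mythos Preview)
Your proposal is correct and follows essentially the same argument as the paper: build the bipartite graph on $U$ and $U^*$ with edges given by $d_G(u,v)\le 1$, identify $N_H(A)=U^*\cap N_G[A]$, invoke Lemma~\ref{lemma:independent_set_neighborhood} on $A$, and apply Hall's Matching Theorem. You are slightly more explicit than the paper in noting that $A$ is independent as a subset of $U$, which is exactly what justifies applying the lemma to $A$.
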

	\begin{proof}
		Construct the bipartite graph $H=(U \cup U^*, E)$ where all vertices of $U$ are considered to be distinct from the ones in $U^*$ and $E=\{(u,v) \in U \times U^* : v \in N_G[\{u\}] \}$.
		Notice that, by construction, if two vertices $u,v$ are adjacent in $H$ either they are the same vertex or they are adjacent in $G$, i.e., $d_G(u,v)\le 1$. 		
		
		Now, Lemma \ref{lemma:independent_set_neighborhood} implies that, for every $A \subseteq U$, we have $|N(A)| = |U^* \cap N_G[A]| \ge |A|$.
		Hence, from Hall's Matching Theorem, there is a matching of size $|U|$ on $H$ (and thus the function $f$ exists).
	\qed\end{proof}		
	
	We are now ready to prove:
	\begin{theorem}
		\label{thm:mov_ind_max_opt_plus_one}
		There exists a polynomial-time algorithm for $\ind$-$\mmax$ which, for every class of graphs where the maximum independent set can be found in polynomial time, computes a solution $\tilde{\mu}$ such that $c(\tilde{\mu}) \le c^* + 1$ where $c^*$ is the measure of an optimal solution.
	\end{theorem}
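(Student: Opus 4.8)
The plan is to reduce the problem to a bottleneck bipartite matching between the pebbles and a \emph{fixed} maximum independent set $U^*$ of $G$, exploiting Lemma~\ref{lemma:injective_function} to show that this restriction costs at most an additive $1$ with respect to the optimum.

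First I would compute a maximum independent set $U^*$ of $G$, which is possible in polynomial time by hypothesis. If $|U^*| < k$ then no independent set of size $k$ exists and the instance is infeasible; otherwise $|U^*| \ge k$, and mapping the $k$ pebbles bijectively onto any $k$ vertices of $U^*$ already yields a feasible solution. Next, compute all the distances $d_G(\sigma(p), v)$ for $p \in P$ and $v \in U^*$ (e.g., by a BFS from each distinct start vertex). For a threshold $D \in \{0, 1, \dots, n-1\}$ build the bipartite graph $H_D = (P \cup U^*, E_D)$ with $E_D = \{ (p,v) : d_G(\sigma(p), v) \le D \}$, and let $\tilde{D}$ be the smallest $D$ for which $H_D$ admits a matching saturating $P$; such a $D$ exists and can be found in polynomial time by a linear (or binary) search combined with a maximum-matching computation. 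The algorithm returns the solution $\tilde{\mu}$ that maps each pebble to its partner in a matching of $H_{\tilde{D}}$ saturating $P$.

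It then remains to prove correctness. Since the matching is injective and saturates $P$, the set $\tilde{\mu}[P]$ is a $k$-subset of the independent set $U^*$, hence an independent set of size $k$; thus $\tilde{\mu}$ is feasible, and by construction $c(\tilde{\mu}) = \max_{p \in P} d_G(\sigma(p), \tilde{\mu}(p)) = \tilde{D}$. For the approximation guarantee, let $\mu^*$ be an optimal solution; since $\ind(\mu^*[P])$ holds, $\mu^*$ is injective and $\mu^*[P]$ is an independent set of $G$. Applying Lemma~\ref{lemma:injective_function} with $U = \mu^*[P]$ yields an injective $f : \mu^*[P] \to U^*$ with $d_G(u, f(u)) \le 1$, so $\nu = f \circ \mu^* : P \to U^*$ is injective and, by the triangle inequality, $d_G(\sigma(p), \nu(p)) \le d_G(\sigma(p), \mu^*(p)) + d_G(\mu^*(p), \nu(p)) \le c^* + 1$ for every $p$. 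Hence $\nu$ is a matching of $H_{c^*+1}$ saturating $P$, and by minimality of $\tilde{D}$ we get $\tilde{D} \le c^* + 1$, i.e.\ $c(\tilde{\mu}) \le c^* + 1$ (in fact $c^* \le c(\tilde{\mu}) \le c^*+1$, since $\tilde\mu$ is feasible).

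The main point — rather than an obstacle — is the realization that one must not search for an independent set \emph{reachable} within a given budget (selecting such a set is $\NP$-hard, as it subsumes finding an independent set of size $k$), but instead anchor the search to a single precomputed maximum independent set; Lemma~\ref{lemma:injective_function} is exactly what legitimizes this up to the unavoidable additive $1$. Everything else (all-pairs distances restricted to start vertices, bottleneck bipartite matching) is standard and clearly polynomial.
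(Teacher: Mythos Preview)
Your proposal is correct and follows essentially the same approach as the paper: compute a maximum independent set $U^*$, then solve a bottleneck bipartite matching between the pebbles and $U^*$ by scanning thresholds, and invoke Lemma~\ref{lemma:injective_function} on $\mu^*[P]$ to bound the optimal threshold by $c^*+1$. Your write-up is slightly more explicit about feasibility and the composition $\nu = f \circ \mu^*$, but the algorithm and the key argument are the same.
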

	\begin{proof}
		The algorithm computes a maximum independent set $U^*$ of $G$ then, if $k > |U^*|$ it reports infeasibility, otherwise it optimally moves the pebbles towards (a subset of) the vertices of $U^*$. To do that, it proceeds as follows: for every value of $z$ from $0$ to $n-1$, it computes a solution $S_z$ for the maximum matching problem on the auxiliary bipartite graph $H=(A \cup U^*,E)$, where each vertex in $A$ is associated with a pebble, and $(p \in A,v \in U^*) \in E$ if and only if $d_G(\sigma(p), v) \le z$.
		
		Let $\tilde{z}$ be the first value of $z$ such that $|S_z|=k$, i.e., all the pebbles have been matched. For every pebble $p \in P$ set $\tilde{\mu}(p)=v$, where $v$ is the only vertex such that $(p,v)\in S_{\tilde{z}}$, and return $\tilde{\mu}$.
		Clearly $c(\tilde{\mu})=\tilde{z}$. Let $\mu^*$ be an optimal solution to $\ind$-$\mmax$ and let $U=\mu^*[P]$. By Lemma \ref{lemma:injective_function} there exists an injective function $f$ that maps every vertex of the independent set $U$ to an adjacent vertex of $U^*$. Thus, for every $p \in P$ we have $d_G(\sigma(p), \mu^*(p)) + d_G(\mu^*(p), f(\mu^*(p))) \le c^* + 1$, and therefore there exists a way to place all the pebbles on vertices of $U^*$ while travelling a maximum distance of at most $c^*+1$. This implies $\tilde{z} \le c^*+1$.
	\qed\end{proof}

	\subsection{Independency motion problems on trees and paths}

	Concerning independency motion problems on trees, we are able to devise two dynamic-programming algorithms for $\ind$-$\msum$ and $\ind$-$\mnum$, respectively.
	We note that it remains open to establish whether $\ind$-$\mmax$ on trees can be solved in polynomial time. We will, however, devise an efficient algorithm for solving $\ind$-$\mmax$ on paths.
	The details of these algorithm are presented in the following subsections.

	\subsubsection{Solving $\ind$-$\msum$ on trees}

	Our dynamic-programming algorithm relies on the property of optimal solutions shown by Lemma \ref{lemma:sum_tree_no_crossing} that is valid also for $\ind$-$\msum$.
	
	The algorithm first roots the tree $G$ at an arbitrary vertex $r \in V(G)$ to obtain the rooted tree $G_r$, then considers all the subtrees of $G_r$ in a bottom-up fashion.
	
	Here, the auxiliary problem we consider when a subtree $T_u$ of $G$ rooted at vertex $u$ is examined is that of placing exactly $j \le k$ pebbles on the vertices of $T_u$ such that their final positions induce an independent set of $G$ (and each pebble is placed on different vertex). Remind that $\eta(u)$ denotes the number of pebbles placed on $V(T_u)$.
	As for the connectivity problems, if $j < \eta(u)$ we want to move the $\eta(u)-j$ exceeding pebbles to the parent of $u$ (and thus outside $T_u$). In a similar manner, if $j > \eta(u)$, then the $j-\eta(u)$ missing pebbles are to be moved into $T_u$ from the parent of $u$, where we assume they are initially placed.
	
	We will denote by $\Opt[u,j]$ the cost of the optimal movement for this auxiliary problem. Notice that, in $\Opt[u,j]$, we are accounting for the cost of traversing all the edges of $T_u$ plus the edge from $u$ to its parent. To solve the original problem we need to find a solution corresponding to $\Opt[r,k]$. Clearly as $k=\eta(r)$ we do not have exceeding or missing pebbles, in this case. We now show how to combine these auxiliary problems.
	
	Let $\Opt^+[u,j]$ (resp., $\Opt^-[u,j]$) with $u \in V(G)$ and $0 \le j \le n$ be the value of an optimal solution to the auxiliary problem where exactly one pebble must be placed on $u$ (resp., no pebble can be placed on $u$). Notice that $\Opt[u,j] = \min \{ \Opt^+[u,j], \Opt^-[u,j] \}$. We will say that infeasible solutions have cost $+\infty$.

	If $u$ is a leaf we clearly have:
	\[
		\Opt^+[u,j] =
			\begin{cases}
				|\eta(u) - 1| & \mbox{if } j = 1; \\
				+\infty & \mbox{if } j=0 \mbox{ or } j \ge 2. \\
			\end{cases}			
	\quad \enskip		
		\Opt^-[u,j] =
			\begin{cases}
				\eta(u) & \mbox{if } j = 0; \\
				+\infty & \mbox{if } j \ge 1.
			\end{cases}
	\]
	
	If $u$ is not a leaf then we can either place a pebble on $u$ or not. If we do place it, the corresponding optimal value is:
	\[
		\Opt^+[u,j] =
		\begin{cases}
			+\infty & \mbox{if } j=0; \\
			|\eta(u) - j| + {\displaystyle \min_{\substack{0 \le j_1, \dots, j_{\ell} < j \\ \sum_{i=1}^{\ell} j_i = j-1}}} \left\{ \sum_{i=1}^{\ell} \Opt^-[v_i, j_i] \right\} & \mbox{if } j \ge 1,
		\end{cases}
	\]
	where $v_1, \dots, v_\ell$ is the set of children of $u$ in $G_r$. If $j$ is at least $1$ we place a pebble on $u$ and then consider all the possible ways of placing a total of $j-1$ pebbles on the subtrees rooted at the children of $u$, but not on the children themselves.

	If we do not place a pebble on $u$, then the optimal value is:	
	\[
		\Opt^-[u,j] = |\eta(u) - j| + {\displaystyle \min_{\substack{0 \le j_1, \dots, j_{\ell} \le j \\ \sum_{i=1}^{\ell} j_i = j }}} \left\{ \sum_{i=1}^{\ell} \Opt[v_i, j_i] \right\}.
	\]

\noindent
	This corresponds to the optimal way of placing $j$ pebbles on the subtrees rooted at the children of $u$.
	
	As we have previously shown, the optimal assignment to $j_1, \dots, j_{\ell}$ can be found in $O(\ell \cdot k)$ time by using dynamic programming.
	The value of the optimal solution is stored in $\Opt[r, k]$ and the solution itself can be reconstructed by proceeding backwards as shown in Section \ref{sec:con_sum_trees} for $\con$-$\msum$.

	As we can compute the values $\Opt[u,j]$, for a fixed $j$ and all $v \in V$, in $O(n \cdot k)$ time (the $\ell$-values sum up to $n-1$), the total time required by the algorithm is $O(n \cdot k^2)$.

	The above discussion, immediately leads to the following:	
	\begin{theorem}
		\label{thm:ind_sum_trees}
		$\ind$-$\msum$ on trees can be solved in $\O(n \cdot k^2)$ time.
	\end{theorem}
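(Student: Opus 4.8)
The plan is to present the dynamic-programming algorithm outlined above and to verify its correctness and its $O(n\cdot k^2)$ running time. First I would note that Lemma~\ref{lemma:sum_tree_no_crossing}, established for $\con$-$\msum$, applies unchanged to $\ind$-$\msum$: the swap argument only relies on $\mu'[P]=\mu^*[P]$, which keeps the independency predicate satisfied. Hence, after rooting the tree at an arbitrary $r$, in an optimal solution every subtree $T_u$ either exports exactly $\eta(u)-j$ pebbles across the edge to its parent or imports $j-\eta(u)$ of them, never both; so that edge contributes exactly $|\eta(u)-j|$ to $\msum$, and the contributions of edges lying in vertex-disjoint subtrees add up. This justifies defining $\Opt[u,j]$ as the optimal movement cost of the auxiliary problem (place exactly $j$ pebbles in $T_u$ forming an independent set of $G$, charging all edges of $T_u$ and the parent edge of $u$), with the global optimum being $\Opt[r,k]$.

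Next I would justify the recurrences by induction on $|V(T_u)|$. The split into $\Opt^+[u,j]$ and $\Opt^-[u,j]$ is forced by the independency constraint on each edge $(u,v_i)$: a pebble on $u$ forbids pebbles on its children. Accordingly, when $u$ is occupied we spread the remaining $j-1$ pebbles over the children's subtrees via $\Opt^-[v_i,\cdot]$, when $u$ is free we spread all $j$ via $\Opt[v_i,\cdot]$, and in both cases we add $|\eta(u)-j|$ for the parent edge. No further independency checks are needed: final positions inside different child-subtrees are pairwise non-adjacent in the tree, and pebbles pushed out of $T_u$ leave it completely. For soundness, a feasible placement realising $\Opt[u,j]$ restricts to feasible placements on the child-subtrees matching one branch of the recurrence; for achievability, any combination built by the recurrence yields a feasible configuration whose stated cost is attained by genuine moves thanks to the no-crossing structure. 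The leaf formulas are immediate from the definition.

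Then I would deal with the inner optimisation, i.e.\ the minimum of $\sum_{i=1}^\ell \Opt^{-}[v_i,j_i]$ (respectively $\sum_{i=1}^\ell \Opt[v_i,j_i]$) over all vectors $(j_1,\dots,j_\ell)$ whose entries sum to $j-1$ (respectively $j$). Although the number of such vectors is exponential, this is exactly the ``distribute $h$ items among $\ell$ bins with per-bin cost'' subproblem already handled for $\con$-$\msum$ by the table $\Min[i,h]=\min_{0\le z\le h}\{\Min[i-1,z]+\Opt[v_i,h-z]\}$, which is filled in $O(\ell\cdot k)$ time per vertex. Since the children counts $\ell$ sum to $n-1$ over all vertices, for each fixed $j$ all values $\Opt[u,j]$ are obtained in $O(n\cdot k)$ time, and ranging over $j\le k$ gives the claimed $O(n\cdot k^2)$ bound; an actual optimal solution is recovered by a bottom-up backtracking pass over the $\Opt$, $\Opt^+$, $\Opt^-$ and $\Min$ tables, tracking how many pebbles cross each parent edge and in which direction, exactly as for $\con$-$\msum$, without changing the asymptotics.

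I expect the delicate point to be the correctness proof rather than the complexity analysis: one must argue that the restriction of a globally optimal solution to a subtree $T_u$ is a feasible instance of the auxiliary problem for the appropriate $j$, and that charging $|\eta(u)-j|$ to the parent edge is tight — both resting on Lemma~\ref{lemma:sum_tree_no_crossing} together with the fact that the independency predicate is preserved under taking induced placements on subtrees.
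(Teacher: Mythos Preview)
Your proposal is correct and follows essentially the same approach as the paper: both rely on Lemma~\ref{lemma:sum_tree_no_crossing} applying to $\ind$-$\msum$, root the tree at an arbitrary vertex, split $\Opt[u,j]$ into $\Opt^+$ and $\Opt^-$ according to whether $u$ is occupied, and reuse the $\Min[i,h]$ table from Section~\ref{sec:con_sum_trees} to distribute pebbles among the children's subtrees, arriving at the same $O(n\cdot k^2)$ bound. Your write-up is somewhat more explicit than the paper's on the inductive justification of the recurrences and on why the no-crossing lemma makes the $|\eta(u)-j|$ charge tight, but the underlying algorithm and its analysis coincide.
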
 %
	
\subsubsection{Solving $\ind$-$\mnum$ on trees}

	The algorithm is similar to the one for $\ind$-$\msum$: we root the tree $G$ at an arbitrary vertex $r \in V(G)$ to obtain the rooted tree $G_r$, then we consider all the subtrees of $G_r$ in a bottom-up fashion.

	Let $\gamma(u)$ be $1$ if at least one pebble is initially placed on the vertex $u \in V(G)$, and $0$ otherwise.
	As before, when the subtree $T_u$, rooted at the vertex $u$, of $G_r$ is examined, we consider an auxiliary problem where we want to place exactly $j \le k$ pebbles on the vertices of $T_u$ such that their final positions induce an independent set of $G$ (and each pebble is placed on a different vertex).

	We will measure the cost of a solution for this auxiliary problem by examining the number of pebbles placed on each vertex of $T_u$. Removing pebbles from a vertex costs nothing, while placing a pebble on a vertex costs $1$ if it comes from a different vertex.	A way to visualize this auxiliary problem is to imagine the tree $T_u$ where no pebbles have been placed, and	a pool of $j$ pebbles to be distributed on its vertices.
	If a vertex $v$ is such that $\gamma(v)=1$ then it can hold a single pebble for free, otherwise placing a pebble on $v$ will increase the overall cost by $1$.	

	We will denote by $\Opt[u,j]$ the cost of the optimal movement for this auxiliary problem. To solve the original problem we need to find a solution corresponding to $\Opt[r,k]$. We now show how to combine these auxiliary problems.
	
	Let $\Opt^+[u,j]$ (resp., $\Opt^-[u,j]$) with $u \in V(G)$ and $0 \le j \le n$ be the value of an optimal solution to the auxiliary problem where exactly one pebble must be placed on $u$ (resp., no pebbles can be placed on $u$). Notice that $\Opt[u,j] = \min \{ \Opt^+[u,j], \Opt^-[u,j] \}$. We will say that infeasible solutions have cost $+\infty$.

	If $u$ is a leaf of $G_r$ then we have:
	\[
		\Opt^+[u,j] =
		\begin{cases}
			+\infty	& \mbox{if } j=0 \mbox{ or } j \ge 2; \\
			1-\gamma(u)	& \mbox{if } j=1.
		\end{cases} \quad \quad
		\Opt^-[u,j] =
		\begin{cases}
			0		& \mbox{if } j=0; \\
			+\infty	& \mbox{if } j \ge 1.
		\end{cases}
	\]

	If $u$ is not a leaf then, with a reasoning similar to that we did for $\ind$-$\msum$ we have:
	\[
		\Opt^+[u,j] =
		\begin{cases}
			+\infty	& \mbox{if } j=0; \\
			1-\gamma(u) + {\displaystyle \min_{\substack{0 \le j_1, \dots, j_{\ell} < j \\ \sum_{i=1}^{\ell} j_i = j-1}}} \left\{ \sum_{i=1}^{\ell} \Opt^-[v_i, j_i] \right\} & \mbox{if } j \ge 1,
		\end{cases}	
	\]
	and
	\[
		\Opt^-[u,j] =
		 \min_{ \substack{0 \le j_1, \dots, j_{\ell} \le j \\ \sum_{i=1}^{\ell} j_i = j} } \left\{ \sum_{i=1}^{\ell} \Opt[v_i, j_i] \right\},
	\]
	where $v_1, \dots, v_{\ell}$ are the children of $u$ in $G_r$.

	Again, the optimal assignment to $j_1, \dots, j_\ell$ can be found in $O(\ell \cdot k)$ time by using dynamic programming. The value of the optimal solution is stored in $\Opt[r, k]$ and the solution itself can be reconstructed by proceeding backwards.
	As we can compute the values $\Opt[u,j]$, for a fixed $j$ and all $v \in V$, in $O(n \cdot k)$ time (the $\ell$-values sum up to $n-1$), the total time required by the algorithm is $O(n \cdot k^2)$.

	The above discussion immediately leads to the following:
	\begin{theorem}
		\label{thm:ind_num_trees}
		$\ind$-$\mnum$ on trees can be solved in $\O(n \cdot k^2)$ time.
	\end{theorem}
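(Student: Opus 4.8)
The plan is to solve the instance by a bottom-up dynamic program over a rooted version $G_r$ of the tree, where $r \in V(G)$ may be chosen \emph{arbitrarily}: unlike the connectivity case there is no mandatory occupied vertex to be guessed, so a single root suffices and no extra $\log n$ factor appears. First I would reformulate the objective. Fix a target independent set $S$ with $|S| = k$ (one pebble per vertex of $S$). The minimum number of moved pebbles for this target equals $k - \sum_{v \in S} \gamma(v)$: for every $v \in S$ with at least one pebble initially on it we may keep exactly one such pebble in place for free, each vertex of $S$ can absorb at most one unmoved pebble, and the remaining $k - \sum_{v\in S}\gamma(v)$ target slots can always be filled by (necessarily moving) pebbles since their number matches the number of remaining pebbles. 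Hence $\mnum$ restricted to trees is equivalent to choosing an independent set $S$ of size exactly $k$ minimizing $\sum_{v\in S}(1-\gamma(v))$, i.e.\ distributing a pool of $k$ indistinguishable pebbles onto an independent set where a vertex $v$ is charged $1-\gamma(v)$ when it receives its single pebble.

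Next I would introduce the table $\Opt[u,j]$, the optimal cost of placing exactly $j$ pebbles inside $T_u$ so their positions form an independent set, together with the refinements $\Opt^+[u,j]$ (exactly one pebble on $u$) and $\Opt^-[u,j]$ (no pebble on $u$), so that $\Opt[u,j] = \min\{\Opt^+[u,j],\Opt^-[u,j]\}$, with infeasible entries set to $+\infty$. The recurrences are precisely the ones stated before the theorem: at a leaf the values are immediate; at an internal vertex $u$ with children $v_1,\dots,v_\ell$, placing a pebble on $u$ forces every child to be unoccupied, whence $\Opt^+[u,j] = 1-\gamma(u) + \min \sum_i \Opt^-[v_i,j_i]$ over $\sum_i j_i = j-1$, while leaving $u$ empty gives $\Opt^-[u,j] = \min \sum_i \Opt[v_i,j_i]$ over $\sum_i j_i = j$. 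Correctness rests on the tree-specific fact that a set is independent iff it is independent within the subtree hanging from each vertex and no chosen vertex is adjacent to its chosen parent — a condition the $\Opt^+/\Opt^-$ split enforces. The answer to the original instance is read off from $\Opt[r,k]$, and infeasibility (that is, $k$ exceeding the independence number of $G$) is detected exactly when this entry equals $+\infty$.

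The only point needing care is that the inner minimizations range over the exponentially many compositions $(j_1,\dots,j_\ell)$ of $j-1$ or $j$. This is dispatched exactly as in Section~\ref{sec:con_sum_trees}: define $\Min[i,h]$ as the best cost of spreading $h$ pebbles over the first $i$ child-subtrees, with $\Min[1,h] = \Opt[v_1,h]$ and $\Min[i,h] = \min_{0 \le z \le h}\{\Min[i-1,z] + \Opt[v_i,h-z]\}$, computing all the minima needed at $u$ in $O(\ell \cdot k)$ time; since the child counts $\ell$ sum to $n-1$, all $\Opt$ values for a fixed $j$ cost $O(n \cdot k)$, hence $O(n \cdot k^2)$ overall. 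An explicit optimal set of moves is then recovered by a standard backward pass recording, at each vertex, which branch of the $\min$ and which pebble split were used, and finally leaving one original pebble in place at every chosen vertex with $\gamma = 1$. The main obstacle, such as it is, is not the dynamic program but the opening reformulation of $\mnum$ as the additive vertex charge $1-\gamma(v)$ — verifying it is simultaneously a valid lower bound on the number of moved pebbles and attainable; the rest is a routine adaptation of the $\con$-$\msum$ machinery.
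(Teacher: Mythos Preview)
Your proposal is correct and follows essentially the same approach as the paper: an arbitrary rooting, the same $\Opt^+/\Opt^-$ split with the charge $1-\gamma(v)$ per occupied vertex, and the same child-by-child $\Min[i,h]$ table to handle the compositions, yielding the $O(n\cdot k^2)$ bound. Your explicit justification that $\mnum$ reduces to minimizing $\sum_{v\in S}(1-\gamma(v))$ over size-$k$ independent sets is a slightly more careful statement of what the paper leaves implicit in its ``pool of pebbles'' description, but the algorithm and analysis are the same.
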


\subsubsection{Solving $\ind$-$\mmax$ on paths}

	In this section we concentrate on $\ind$-$\mmax$, and we devise an efficient algorithm for solving $\ind$-$\mmax$ on the special case where $G$ is a path. We start by proving the following:
	
	\begin{lemma}
		\label{lemma:swap}
		Let $s_1, s_2, e_1, e_2 \in \mathbb{N}$ where $s_1 \le s_2$ and $e_2 \le e_1$, we have \linebreak $\max\left\{ | s_1 - e_1|, |s_2 - e_2| \right\} \ge \max\left\{ | s_1 - e_2|, |s_2 - e_1| \right\}$.
	\end{lemma}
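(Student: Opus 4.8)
The plan is to prove the stronger statement that each of the two quantities appearing on the right-hand side, namely $|s_1 - e_2|$ and $|s_2 - e_1|$, is individually bounded above by $\max\{|s_1 - e_1|, |s_2 - e_2|\}$. Once both bounds are in hand, the claimed inequality follows at once by taking the maximum over the two.

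The key observation I would establish first is that both of the differences $s_1 - e_2$ and $s_2 - e_1$ lie in the closed interval with endpoints $s_1 - e_1$ and $s_2 - e_2$. From $e_2 \le e_1$ we get $s_1 - e_1 \le s_1 - e_2$, and from $s_1 \le s_2$ we get $s_1 - e_2 \le s_2 - e_2$; likewise $s_1 \le s_2$ gives $s_1 - e_1 \le s_2 - e_1$, and $e_2 \le e_1$ gives $s_2 - e_1 \le s_2 - e_2$. Since $s_1 \le s_2$ and $e_2 \le e_1$ also imply $s_1 - e_1 \le s_2 - e_2$, the interval $[s_1 - e_1,\, s_2 - e_2]$ is nonempty and the containments just described make sense.

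The second ingredient is the elementary fact that the absolute value, being convex, attains its maximum over any closed interval at one of the two endpoints; hence every $x \in [a,b]$ satisfies $|x| \le \max\{|a|,|b|\}$. Applying this with $[a,b] = [s_1 - e_1,\, s_2 - e_2]$ and $x = s_1 - e_2$, and then again with $x = s_2 - e_1$, gives $|s_1 - e_2| \le \max\{|s_1-e_1|,|s_2-e_2|\}$ and $|s_2 - e_1| \le \max\{|s_1-e_1|,|s_2-e_2|\}$, which together yield the lemma.

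I do not expect a genuine obstacle here; the only mildly delicate point is getting the interval containment stated in the right direction (both crossed differences must lie between $s_1 - e_1$ and $s_2 - e_2$), which is forced by the two hypotheses $s_1 \le s_2$ and $e_2 \le e_1$. If one prefers to avoid appealing to convexity, the same conclusion is reached by a brief case analysis on the signs of $s_1 - e_2$ and of $s_2 - e_1$: in each case where the quantity is nonnegative it is bounded by $|s_2 - e_2|$, and in each case where it is negative it is bounded by $|s_1 - e_1|$.
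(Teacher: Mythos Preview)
Your proof is correct and takes a somewhat different route from the paper. The paper argues by case analysis: it first splits on which of $|s_1-e_2|$ and $|s_2-e_1|$ is the larger (hence equals the right-hand maximum), and then splits again on the sign of that term, in each sub-case exhibiting one of $|s_1-e_1|$ or $|s_2-e_2|$ as an upper bound. Your argument instead proves the stronger fact that \emph{both} $|s_1-e_2|$ and $|s_2-e_1|$ are bounded by $\max\{|s_1-e_1|,|s_2-e_2|\}$, via the clean observation that $s_1-e_2$ and $s_2-e_1$ lie in the interval $[s_1-e_1,\,s_2-e_2]$ together with the convexity of $|\cdot|$. This is more conceptual and avoids the four-way case split; the paper's version is more elementary in that it never invokes convexity. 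Your final paragraph (case analysis on the sign of each crossed difference) is in fact essentially the paper's approach, so you have identified both routes.
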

	\begin{proof}
		Suppose that $| s_1 - e_2| \ge |s_2 - e_1|$.
		If $s_1 \ge e_2$ then $|s_2 - e_2| \ge |s_1 - e_2|$, else if $s_1 < e_2$ then $|s_1 - e_1| = e_1 - s_1 \ge  e_2 -s_1 = |s_1 - e_2|$.
		
		In a similar manner suppose $|s_1 - e_2| \le |s_2 - e_1|$.
		If $s_2 \ge e_1$ then $|s_2 - e_2| \ge |s_2 - e_1|$, else if $s_2 < e_1$ then $|s_1 - e_1| = e_1 - s_1 \ge e_1 - s_2 = |s_2 - e_1|$.
	\qed\end{proof}

	Let $r$ be any endpoint of $G$, with a little abuse of notation we will identify the vertices of $G$ by their distance from $r$: the vertex $i$ with $0 \le i < n$ will be the unique vertex $u$ such that $d_G(r, u)=i$.
	
	 Let $p_1, \dots, p_k$ be an ordering of the pebbles such that $\sigma(p_i) \le \sigma(p_{i+1})$ for $i=1,\dots,k-1$.
	The following lemma shows that there exists an optimal solution where the ordering of pebbles is preserved, i.e., no edge of $G$ is traversed by two pebbles going in opposite directions.\footnote{This is not true if $G$ is a tree: take the tree $T$ with vertex set $\{1, 2, \dots, 11\}$ and edge set $\{ (i,i+1) : 1 \le i \le 4 \} \cup \{ (2,6), (2,7), (3,8),(3,9), (4,10), (4,11) \}$, place $2$ pebbles on vertex $1$ and one pebble on each vertex in $\{6, 7, 8, 9, 10, 11\}$. It is easy to see that  any optimal solution has cost $3$ and that at least one edge must be traversed in opposite directions.}
	
	\begin{lemma}
		\label{lemma:no_crossing}
		If there exists a feasible solution $\mu$ for $\ind$-$\mmax$ on paths then there also exists a solution $\mu^\prime$ such that $\mu^\prime(p_i) < \mu^\prime(p_{i+1})$ for every $i=1,\dots,k-1$ and $c(\mu^\prime) \le c(\mu)$.
	\end{lemma}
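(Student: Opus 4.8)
The plan is to prove the statement by an exchange argument, showing that any "crossing" in a feasible solution can be removed without increasing the cost, and that after finitely many such operations the resulting solution has the pebbles in the desired non-decreasing order. The key tool is Lemma~\ref{lemma:swap}, which says precisely that swapping the final positions of two pebbles that are "out of order" does not increase the maximum of the two individual movements.

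First I would set up the exchange operation. Suppose $\mu$ is feasible but there exist indices $i < i'$ with $\mu(p_i) \ge \mu(p_{i'})$ (here I use the convention from the excerpt identifying vertices of the path with their distance from the endpoint $r$). Since $\sigma(p_i) \le \sigma(p_{i'})$ by the choice of ordering, I can apply Lemma~\ref{lemma:swap} with $s_1 = \sigma(p_i)$, $s_2 = \sigma(p_{i'})$, $e_1 = \mu(p_i)$, $e_2 = \mu(p_{i'})$ (the hypotheses $s_1 \le s_2$ and $e_2 \le e_1$ are exactly what we have). Define $\mu''$ to agree with $\mu$ everywhere except $\mu''(p_i) = \mu(p_{i'})$ and $\mu''(p_{i'}) = \mu(p_i)$. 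Then $\mu''[P] = \mu[P]$, so $\mu''$ is still feasible (the predicate $\ind$ depends only on the image, and the image is unchanged), and Lemma~\ref{lemma:swap} gives $\max\{d_G(\sigma(p_i),\mu''(p_i)), d_G(\sigma(p_{i'}),\mu''(p_{i'}))\} \le \max\{d_G(\sigma(p_i),\mu(p_i)), d_G(\sigma(p_{i'}),\mu(p_{i'}))\}$; since all other pebble movements are untouched, $c(\mu'') \le c(\mu)$.

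Next I would argue termination. Because the pebbles ultimately land on an independent set, the multiset of final positions has no repeated values (at most one pebble per vertex), so after possibly breaking ties arbitrarily among pebbles sharing a start vertex we may restrict attention to potential functions that strictly decrease. A clean choice is the number of inversions, i.e., the number of pairs $i < i'$ with $\mu(p_i) \ge \mu(p_{i'})$; I would pick the swap to involve an \emph{adjacent} inversion $i' = i+1$ with $\mu(p_i) > \mu(p_{i+1})$ (such a pair exists whenever any inversion exists, and strict inequality holds because the final positions are distinct), so that a single swap strictly decreases the inversion count while never increasing $c$. Iterating, after at most $\binom{k}{2}$ swaps we reach a solution $\mu'$ with no inversions, i.e., $\mu'(p_1) < \mu'(p_2) < \dots < \mu'(p_k)$, and $c(\mu') \le c(\mu)$ by transitivity. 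This $\mu'$ is the claimed solution.

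The main obstacle, and the only subtle point, is verifying that Lemma~\ref{lemma:swap} really applies with the inequalities oriented correctly and that feasibility is genuinely preserved — the latter is immediate here only because $\ind$ is image-only, but I would state it explicitly. A secondary care point is the strictness needed for termination: one must observe that in an independent-set configuration all end vertices are distinct, so an adjacent inversion $\mu(p_i) \ge \mu(p_{i+1})$ is in fact $\mu(p_i) > \mu(p_{i+1})$, and after the swap the two pebbles are strictly ordered, guaranteeing the inversion count drops. With those observations in place the argument is routine.
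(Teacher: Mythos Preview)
Your proposal is correct and follows essentially the same approach as the paper: both argue by swapping adjacent out-of-order pebbles, invoke Lemma~\ref{lemma:swap} to show the maximum movement does not increase, observe that feasibility is preserved because the image set is unchanged, and note that equality $\mu(p_i)=\mu(p_{i+1})$ is impossible since the final positions form an independent set. Your treatment of termination via the inversion count is slightly more explicit than the paper's, which simply says the swap ``can be repeated until all the pebbles are placed in the right order,'' but the underlying argument is identical.
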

	\begin{proof}
		Let $\mu$ be a feasible solution. We will show that whenever there are two pebbles $p_i$ and $p_{i+1}$ such that $\mu(p_i) > \mu(p_{i+1})$, they can be swapped without increasing the cost of $\mu$ to obtain another feasible solution.\footnote{The case $\mu(p_i) = \mu(p_{i+1})$ is clearly impossible.} If needed, the procedure can be repeated until all the pebbles are placed in the right order.
		
		 In order to prove the above, consider a new solution $\mu^\prime$ where $\mu^\prime(p_i)=\mu(p_{i+1})$, $\mu^\prime(p_{i+1})=\mu(p_i)$, and $\mu^\prime(p_j) = \mu(p_j)$ for every $j \not\in \{i,i+1\}$.
		Clearly $\mu^\prime$ is feasible (as we only swapped two pebbles) and requires at most the same maximum movement as $\mu$, because:
		\begin{multline*}
			\max\left\{ d_G(\sigma(p_i), \mu(p_i)), d_G(\sigma(p_{i+1}), \mu(p_{i+1})) \right\} \ge \\
			\max\left\{ d_G(\sigma(p_i), \mu(p_{i+1})), d_G(\sigma(p_{i+1}), \mu(p_{i})) \right\},
		\end{multline*}
		where every distance $d_G(u,v)$ between two vertices $u$ and $v$ can be rewritten as $|u - v|$ and Lemma \ref{lemma:swap} holds.
	\qed\end{proof}

	The algorithm takes as input an instance of $\ind$-$\mmax$ plus a non-negative integer $z$ and exploits the previous property to compute a solution of cost at most $z$, if it exists. The idea is simple: the pebbles are moved towards the endpoint $r$ of the path as most as possible while preserving the feasibility of the solution and the constraint on the cost. The pseudocode is given below.

	\begin{algorithm}[!ht]
	\DontPrintSemicolon
	\SetKwInOut{Input}{Input}
	\SetKwInOut{Output}{Output}
	\SetKw{Break}{break}	

	\caption{Algorithm for $\ind$-$\mmax$\label{alg:IndMax}}
	
	\Input{An instance $\langle G, P, \sigma \rangle$ of $\ind$-$\mmax$ on paths, a non-negative integer $z$;}
	\Output{A solution $\mu$ such that $c(\mu) \le z$, iff such a solution exists.}
	\BlankLine

	$p_1, \dots, p_k \gets$ pebbles sorted by distance from $r$ \;

	$next \gets 0$ \;
	\For{$j \gets 1$ \KwTo $k$}
	{
		$h \gets \max\{next, \sigma(p_j)-z\}$ \;
		\;
		\If{$h \ge n$ or $h > \sigma(p_j)+z$}
		{
			\Return No feasible solution exists
		}			
		\;
		$\mu(p_j) \gets h$  \;
		$next \gets h+2$ \;
	}
	\;
	\Return $\mu$
	\end{algorithm}

	\begin{lemma}
	\label{lemma:alg1_correctness}
	Algorithm \ref{alg:IndMax} correctly returns a solution with maximum movement at most $z$, if such a solution exists, or reports infeasibility if there is no such solution.
	\end{lemma}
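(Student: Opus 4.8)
The plan is to establish two implications. \textbf{Soundness:} whenever Algorithm~\ref{alg:IndMax} outputs a function $\mu$, it is feasible for $\ind$-$\mmax$ and $c(\mu)\le z$. \textbf{Completeness:} whenever some feasible solution of cost at most $z$ exists, the algorithm outputs one and never reaches the infeasibility report. Together these give the statement: if the algorithm reports infeasibility then, by completeness, no feasible solution of cost at most $z$ can exist, and if it returns something then, by soundness, that something is such a solution.

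For soundness I would argue directly from the pseudocode. Writing $h_j$ for the value $h$ computed in iteration $j$, the update $next\gets h+2$ guarantees $h_j\ge h_{j-1}+2$ for every $j\ge 2$, so $0\le h_1<h_2<\dots<h_k<n$ with consecutive gaps at least $2$; on a path this means the $h_j$ are $k$ distinct, pairwise non-adjacent vertices, i.e.\ $\mu[P]$ is an independent set of size $k$ and $\mu$ is feasible. For the cost, note $h_j=\max\{next,\sigma(p_j)-z\}\ge \sigma(p_j)-z$, and the algorithm returns $\mu$ only when the test ``$h>\sigma(p_j)+z$'' failed in every iteration, i.e.\ $h_j\le\sigma(p_j)+z$; hence $d_G(\sigma(p_j),\mu(p_j))=|h_j-\sigma(p_j)|\le z$ for all $j$, so $c(\mu)\le z$.

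For completeness I would use a ``greedy stays ahead'' argument. First, invoking Lemma~\ref{lemma:no_crossing}, I may assume the hypothesised feasible solution $\mu$ of cost at most $z$ is order-preserving: $\mu(p_1)<\dots<\mu(p_k)$, whence feasibility forces $\mu(p_j)\ge\mu(p_{j-1})+2$ and the cost bound forces $\sigma(p_j)-z\le\mu(p_j)\le\sigma(p_j)+z$ together with $\mu(p_j)<n$. Then I prove by induction on $j$ the invariant: the algorithm does not fail in iteration $j$, and the assigned value $h_j$ satisfies $h_j\le\mu(p_j)$. The base case is $h_1=\max\{0,\sigma(p_1)-z\}\le\mu(p_1)$ (and $h_1\le\mu(p_1)<n$, $h_1\le\sigma(p_1)+z$). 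In the inductive step, $next=h_{j-1}+2\le\mu(p_{j-1})+2\le\mu(p_j)$ and $\sigma(p_j)-z\le\mu(p_j)$ give $h_j=\max\{next,\sigma(p_j)-z\}\le\mu(p_j)<n$, while $h_j\le\max\{\mu(p_j),\sigma(p_j)-z\}\le\sigma(p_j)+z$ rules out the other failure test. Hence the algorithm runs to completion and, by soundness, returns a feasible solution of cost at most $z$.

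The only genuinely delicate point is the reduction to an order-preserving reference solution via Lemma~\ref{lemma:no_crossing} (which in turn rests on the swap inequality of Lemma~\ref{lemma:swap}): without it there is no canonical pebble against which to compare the greedy choice $h_j$, and the clean invariant $h_j\le\mu(p_j)$ — which is exactly what simultaneously forbids both the ``$h\ge n$'' and the ``$h>\sigma(p_j)+z$'' failures — would not be available. Everything else is a routine greedy/exchange computation.
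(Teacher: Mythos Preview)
Your proof is correct and follows essentially the same approach as the paper: a ``greedy stays ahead'' induction establishing the invariant $h_j\le\mu^*(p_j)$ against an order-preserving reference solution supplied by Lemma~\ref{lemma:no_crossing}, together with a direct feasibility check on the returned assignment. Your presentation is in fact a bit more explicit than the paper's, cleanly separating soundness from completeness and spelling out why the invariant simultaneously prevents both failure tests; the paper leaves the latter implicit.
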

	\begin{proof}
		First notice that all solutions returned by the algorithm are feasible as no two pebbles can be placed on adjacent vertices and, by construction, no pebble travels a distance greater than $z$. This implies that if there is not a feasible solution, the algorithm correctly reports infeasibility.
		
		Now we will prove that the algorithm correctly computes a solution when this exists.
		Let $\mu^*$ be a solution with maximum movement at most $z$ that preserves the ordering of the pebbles (by Lemma \ref{lemma:no_crossing} such a solution always exists).
		We will prove by induction that at the end of the $j$-th loop	 the first $j$ pebbles are correctly placed (they are on an independent set), and that for the $j$-th pebble $\mu(p_j) \le \mu^*(p_j)$ holds.

		Base case: the first pebble is placed on the first reachable node from $r$, thus $\mu(p_1) \le \mu^*(p_1)$ must hold.
		
		Inductive step: suppose that the property is true for the first $j \ge 1$ pebbles, we will prove that it is also true also for the $(j+1)$-th pebble.
		If the pebble $p_{j+1}$ is placed on vertex $\sigma(p_{j+1})-z$ then the property holds, since this is the vertex with the smallest distance from $r$ that is reachable by $p_{j+1}$.
		Otherwise $\mu(p_{j+1}) = \mu(p_j)+2$, and we have $\mu^*(p_{j+1}) \ge \mu^*(p_j) + 2 \ge \mu(p_j) + 2=\mu(p_{j+1})$.
	\qed\end{proof}

We are ready to give the following:

	\begin{theorem}
		\label{thm:ind_max_paths}
		$\ind$-$\mmax$ on paths can be solved in $\O(n + k \log n)$ time.
	\end{theorem}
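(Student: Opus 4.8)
The plan is to turn Algorithm~\ref{alg:IndMax} into an optimization procedure by binary search on the budget $z$. By Lemma~\ref{lemma:alg1_correctness}, for each fixed non-negative integer $z$ the algorithm decides whether $\ind$-$\mmax$ admits a feasible solution of maximum movement at most $z$, and returns one if so. The predicate ``there exists a feasible solution of cost at most $z$'' is monotone non-decreasing in $z$, since a solution witnessing it for $z$ also witnesses it for every $z' \ge z$. Moreover, as $G$ is a path on $n$ vertices, any two vertices are at distance at most $n-1$; hence if the instance admits any feasible solution (i.e.\ a placement of the $k$ pebbles on an independent set of $G$), it admits one of cost at most $n-1$, obtained by moving each pebble to its target along the path, and conversely if $z=n-1$ is infeasible then no feasible solution exists at all. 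Therefore the optimal cost, when finite, lies in $\{0,1,\dots,n-1\}$, and it can be located by binary search: run Algorithm~\ref{alg:IndMax} with $z=\lfloor (n-1)/2\rfloor$ and recurse on $\{0,\dots,z\}$ or on $\{z+1,\dots,n-1\}$ according to whether it returns a solution or reports infeasibility, stopping at the smallest $z^\ast$ for which a solution is produced (or declaring infeasibility if even $z=n-1$ fails). The solution returned by the last successful call is optimal.

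For the running time, the only step of Algorithm~\ref{alg:IndMax} that is not already $O(k)$ is the initial sorting of the pebbles by distance from $r$; but the pebble positions are integers in $\{0,\dots,n-1\}$, so this sorting can be performed once, outside the binary search, by counting sort in $O(n+k)$ time, and the resulting order does not depend on $z$. Each of the $O(\log n)$ binary-search iterations then executes only the for-loop of the algorithm, which costs $O(k)$. Together with the $O(n+k)$ preprocessing, this yields a total running time of $O(n+k)+O(k\log n)=O(n+k\log n)$, as claimed.

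The points that require care are minor: (i) the sorting must be hoisted out of the per-$z$ calls, so that each call is genuinely $O(k)$ and the $O(n+k)$ term is paid only once; (ii) the search range $\{0,\dots,n-1\}$ must be justified, i.e.\ that $z=n-1$ suffices exactly when the instance is feasible; and (iii) correctness of the whole scheme reduces to Lemma~\ref{lemma:alg1_correctness} (Algorithm~\ref{alg:IndMax} is a correct decision procedure for each $z$) together with the monotonicity of feasibility in $z$. No genuine obstacle arises beyond these bookkeeping observations.
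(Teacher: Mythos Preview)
Your proposal is correct and follows essentially the same approach as the paper: binary search on the budget $z$ using Algorithm~\ref{alg:IndMax} as the decision procedure, with the pebble sorting performed once outside the search. Your write-up is actually more careful than the paper's in justifying monotonicity, the search range, and the use of counting sort, but the underlying argument is identical.
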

	\begin{proof}
		From Lemma \ref{lemma:alg1_correctness} it follows that $\ind$-$\mmax$ on paths can be solved as follows: if $k > n$ then there is no solution, otherwise we can perform a binary search using Algorithm \ref{alg:IndMax} to find the optimal value of $k$ (or report infeasibility).
	
		As far as the time complexity is concerned, each invocation of the algorithm requires time $O(k)$ and at most $O(\log n)$ invocations are required.
		For Algorithm \ref{alg:IndMax} to work, vertices and pebbles must be sorted w.r.t. their distance from $r$, and this needs to be done only once and requires $O(n + k)$ time. The whole procedure then takes $O(n + k \log n)$ time.
	\qed\end{proof}
	
	\section{Clique motion problems}
	\label{sec:clique}

	In this section we prove that the problems $\clique$-$\mmax$, $\clique$-$\msum$, and $\clique$-$\mnum$ are \NP-hard. Then, we give a tight approximation algorithm for $\clique$-$\mmax$ that computes a solution that costs at most one more than the optimal solution. As of $\clique$-$\msum$ and $\clique$-$\mnum$, we show that the problems are not approximable within any factor smaller than $10\sqrt{5}-21$, unless $\P=\NP$, and we devise two $2$-approximation algorithms.
	
	Actually, we will show that any approximation for  $\clique$-$\msum$ or $\clique$-$\mnum$ implies an approximation with the same ratio for the \emph{minimum vertex cover} problem, which is known to be not approximable under $10\sqrt{5}-21$, unless $\P=\NP$ \cite{dinur2005hardness}. Moreover, if the unique game conjecture \cite{khot2002power} is true, then both approximation algorithms are tight as the corresponding problems are not approximable within any constant factor better than $2$ \cite{khot2008vertex}.

	Finally, we show that for classes of graphs where we can find a \emph{maximum-weight clique} in polynomial time, we can also solve $\clique$-$\mnum$ in polynomial time.
	
	\subsection{Approximability of $\clique$-$\mmax$}

	We prove the following:
	\begin{theorem}
		$\clique$-$\mmax$ is \NP-hard.
	\end{theorem}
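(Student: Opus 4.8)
I would prove \NP-hardness by a polynomial reduction from \textsc{Clique} (equivalently \textsc{Independent Set}) to $\clique$-$\mmax$: given a graph $H$ and an integer $q\ge 2$, decide whether $H$ has a clique on $q$ vertices. It is safe to assume $E(H)\neq\emptyset$ and that $H$ has no isolated vertex (a graph with no edge has no $q$-clique for $q\ge 2$, and deleting an isolated vertex does not affect the existence of a clique of size at least two).

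The first step is to build the gadget. From $(H,q)$ I would construct $G$ consisting of $q$ vertex-disjoint copies $V^{(1)},\dots,V^{(q)}$ of $V(H)$ — write $v_a^{(i)}$ for the image of $a\in V(H)$ in the $i$-th copy — where there are no edges inside a copy and, for $i\neq j$, the vertex $v_a^{(i)}$ is adjacent to $v_b^{(j)}$ if and only if $(a,b)\in E(H)$ (note this forces $a\neq b$ since $H$ is loop-free). Then I would add $q$ fresh ``launch'' vertices $p_1,\dots,p_q$, making each $p_i$ adjacent precisely to all of $V^{(i)}$, and place one pebble on each $p_i$, so that $k=q$. One checks immediately that $G$ is connected (each $V^{(i)}$ hangs off $p_i$, and distinct copies are linked because $E(H)\neq\emptyset$), loop-free and of polynomial size, and that the set of vertices at distance at most $1$ from $p_i$ is exactly $V^{(i)}\cup\{p_i\}$; in particular these closed neighbourhoods are pairwise disjoint, which is the crucial design feature.

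The second step is the correctness argument: I would show that $H$ has a $q$-clique if and only if the constructed instance of $\clique$-$\mmax$ admits a solution of cost at most $1$ (equivalently exactly $1$, since for $q\ge 2$ the starting set $\{p_1,\dots,p_q\}$ is not a clique). For the forward direction, given a clique $\{a_1,\dots,a_q\}$ of $H$, move the pebble on $p_i$ to $v_{a_i}^{(i)}$: the maximum movement is $1$, and the final set $\{v_{a_1}^{(1)},\dots,v_{a_q}^{(q)}\}$ induces a clique of $G$, because $(a_i,a_j)\in E(H)$ makes $v_{a_i}^{(i)}\sim v_{a_j}^{(j)}$ for all $i\ne j$. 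For the converse, consider a solution $\mu$ with $c(\mu)\le 1$. By the observation above, each pebble either stays on its $p_i$ or reaches some $v_{a_i}^{(i)}$, and no two pebbles end on the same vertex. At most one pebble can stay on a launch vertex, since the $p_i$ are pairwise non-adjacent; and if exactly one stayed, say on $p_1$, then (using $q-1\ge 1$) the final set would contain $p_1$ together with a vertex of some $V^{(j)}$ with $j\neq 1$, which are non-adjacent — a contradiction. Hence every pebble moves, say $p_i\mapsto v_{a_i}^{(i)}$, and the clique condition on $\{v_{a_1}^{(1)},\dots,v_{a_q}^{(q)}\}$ forces $(a_i,a_j)\in E(H)$ for all $i\neq j$; in particular the $a_i$ are pairwise distinct, so $\{a_1,\dots,a_q\}$ is a $q$-clique of $H$.

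The main obstacle — really the only nontrivial point — is engineering the gadget so that the always-available ``gather all pebbles onto a single vertex'' solution is forced to cost more than $1$, while each pebble still has a genuine free choice among all vertices of a copy of $H$; using $q$ \emph{separate} copies rather than one shared copy of $H$ is exactly what accomplishes this, since then no vertex lies within distance $1$ of two distinct launch vertices. As a bonus, the same construction gives a cost-$2$ solution for every produced instance (move all pebbles onto $v_1^{(1)}$, each via a neighbour of vertex $1$ in $H$, which exists since $H$ has no isolated vertex), so the reduction in fact shows it is \NP-hard to distinguish optimum $1$ from optimum $2$; this matches the additive-$1$ approximation proved next and shows its tightness.
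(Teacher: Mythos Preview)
Your proof is correct, but it takes a genuinely different route from the paper. The paper reduces from \textsc{Dominating Clique}: it simply sets $G=H$, places one pebble on every vertex, and observes that a solution of cost at most~$1$ exists if and only if $H$ has a dominating clique (the image $\mu[P]$ is a clique that dominates $V(H)$, and conversely any dominating clique lets every pebble move at most one step). This is considerably shorter than your construction, needing no multi-copy gadget. Your approach instead reduces from the more canonical \textsc{Clique} problem, at the price of building $q$ disjoint copies plus launch vertices; what this buys you is an explicit cost-$1$-versus-cost-$2$ gap on every produced instance (via your cost-$2$ fallback solution), which you correctly note matches the additive-$1$ upper bound proved immediately afterwards. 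The paper's reduction also implicitly gives a gap of at least one (no dominating clique forces cost $\ge 2$), but does not make the upper side of the gap explicit. Both arguments are valid; the paper's is more economical, yours is more self-contained in its choice of source problem.
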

	\begin{proof}
		We show a reduction from the problem of determining if there exists a \emph{dominating clique} in a graph $H$, which is known to be \NP-complete \cite{haynes1998domination}. A dominating clique is a subset of vertices $C \subseteq V(H)$ such that $C$ is both a clique and a dominating set for $H$.
		
		We construct an instance of $\clique$-$\mmax$ by setting $G=H$ and placing a pebble on each vertex of $G$. We claim that there exists a dominating clique in $H$ if and only if the optimal solution for $\clique$-$\mmax$ has measure $c^*$ at most $1$.
		
		Suppose that there exists a dominating clique $C$ in $H$. By definition, $C$ is also a dominating set of $G$. We define $\mu$ so that every pebble initially placed on a vertex $u \not\in C$ is moved to a vertex $v \in C$ such that $(u,v) \in E(G)$ (notice that such a vertex always exists). After their movement the pebbles are placed on a clique of $G$ and each pebble has travelled a distance of at most $1$.
		
		Now suppose that there exists a solution $\mu$ for $\clique$-$\mmax$ such that $c(\mu) \le 1$. Clearly $\mu[P]$ is a clique, and it is also a dominating set, since for each vertex $u \not\in \mu[P]$, there exists a vertex $v \in \mu[P]$ such that $(u,v) \in G$.
	\qed\end{proof}
	
	On the positive side, we have:
	\begin{theorem}
		It is possible to compute a solution $\tilde{\mu}$ for $\clique$-$\mmax$ such that $c(\tilde{\mu}) \le c^* + 1$ in polynomial time, where $c^*$ is the measure of an optimal solution.
	\end{theorem}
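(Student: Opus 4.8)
The plan is to exploit the fact that any clique has diameter at most one, so that---at the price of a single extra unit of movement---we may as well gather \emph{all} the pebbles on one vertex. Concretely, for every vertex $w \in V(G)$ consider the solution $\mu_w$ that moves every pebble to $w$. Since the image $\mu_w[P]=\{w\}$ is a single vertex, it trivially induces a clique, so $\mu_w$ is feasible for every choice of $w$, and its cost is the eccentricity $c(\mu_w)=\max_{p\in P} d_G(\sigma(p),w)$ of $w$ with respect to the start positions. The algorithm I would use first computes $d_G(u,v)$ for all pairs of vertices (for instance by running a BFS from each vertex, which takes polynomial time), then returns $\tilde{\mu}=\mu_{\tilde w}$ for a vertex $\tilde w$ that minimises $\max_{p\in P} d_G(\sigma(p),w)$ over $w\in V(G)$.

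To establish $c(\tilde\mu)\le c^*+1$, I would take an optimal solution $\mu^*$ and set $Q^*=\mu^*[P]$, which induces a clique. Fixing any vertex $w^*\in Q^*$---say $w^*=\mu^*(p)$ for an arbitrary pebble $p$---every end position $\mu^*(p)$ lies in $Q^*$, hence is either equal to $w^*$ or adjacent to it, so $d_G(\mu^*(p),w^*)\le 1$. The triangle inequality then yields $d_G(\sigma(p),w^*)\le d_G(\sigma(p),\mu^*(p))+d_G(\mu^*(p),w^*)\le c^*+1$ for every pebble $p$, and taking the maximum over $p$ gives $c(\mu_{w^*})\le c^*+1$. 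Since $\tilde w$ minimises precisely this quantity, $c(\tilde\mu)=c(\mu_{\tilde w})\le c(\mu_{w^*})\le c^*+1$, as claimed.

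The only conceptual point---and what might at first look like the obstacle---is that one cannot afford to search for the clique onto which the pebbles should be moved, since even deciding whether a clique of prescribed size exists is \NP-hard. The resolution is the observation above: collapsing the final configuration onto a single vertex costs at most one more than any feasible configuration, because the diameter of a clique is at most one. Once this is noticed, feasibility of the returned solution is immediate (its image is a single vertex), and the running time reduces to an all-pairs shortest path computation, so nothing further is required.
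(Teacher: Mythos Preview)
Your proof is correct and follows essentially the same approach as the paper: enumerate the $n$ solutions $\mu_w$ that gather all pebbles on a single vertex $w$, pick the best one, and argue via the triangle inequality (using that a clique has diameter at most $1$) that for any $w^*\in\mu^*[P]$ one has $c(\mu_{w^*})\le c^*+1$. The paper's presentation is slightly terser, but the algorithm and the analysis are identical.
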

	\begin{proof}
		Consider the following algorithm: for each vertex $u \in V(G)$ construct a solution $\mu_u$ that moves all the pebbles to $u$ (i.e., set $\mu_u(p)=u, \; \forall p \in P$) and compute $c(\mu)$. Among the $n$ possible solutions choose the one of minimum cost and call it $\tilde{\mu}$. 

Now, let $\mu^*$ be an optimal solution and recall that, by definition, $c(\mu^*) = \max_{p \in P} \left\{ d_G(\sigma(p), \mu^*(p))  \right\}$.
		Let $C=\mu^*[P]$ be the clique where pebbles have been placed by $\mu^*$, and notice that when the above algorithm considers a node $u \in C$ we have:		
	$ d_G(\sigma(p), \mu_u(p)) \le d_G(\sigma(p), \mu^*(p)) + 1 \; \forall p \in P
	$, thus $c(\tilde{\mu}) \le c(\mu_u) \le c(\mu^*)+1$. The claim follows.
	\qed\end{proof}

	\subsection{Approximability of $\clique$-$\mnum$}

	We prove the following:
	\begin{theorem}
		\label{thm:2-apx-mov-clique-num}
		$\clique$-$\mnum$ is $2$-approximable and is not approximable within any constant factor smaller than $10\sqrt{5}-21$, unless $\P=\NP$.
	\end{theorem}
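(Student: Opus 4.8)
The plan is to reduce $\clique$-$\mnum$ to the (weighted) minimum vertex cover problem on the complement graph, in both directions and with \emph{no loss} in the approximation ratio. For $S\subseteq V(G)$ write $\varphi(S)=\sum_{v\in S}\varphi(v)$, where $\varphi(v)=|\{p\in P:\sigma(p)=v\}|$, so that $\varphi(V(G))=k$. First I would pin down the structure of optimal solutions: in any feasible $\mu$ the pebbles that are \emph{not} moved sit on a subset of the clique $\mu[P]$, hence on a clique of $G$; conversely, given any clique $Q$ of $G$ one gets a feasible solution of cost $k-\varphi(Q)$ by keeping fixed every pebble starting in $Q$ and moving all the remaining pebbles onto a single vertex of $Q$ that already hosts a fixed pebble (or onto an arbitrary vertex if no pebble starts in $Q$). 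Thus the optimum of $\clique$-$\mnum$ equals $k-\max\{\varphi(Q):Q\text{ is a clique of }G\}$. Since the cliques of $G$ are exactly the independent sets of $\overline{G}$, and the complement of a maximum $\varphi$-weight independent set of $\overline{G}$ is a minimum $\varphi$-weight vertex cover of $\overline{G}$, this rewrites as
\[
\mathrm{OPT}=\min\{\varphi(C):C\text{ is a vertex cover of }\overline{G}\}.
\]

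For the $2$-approximation I would just build $\overline{G}$ with vertex weights $\varphi$, run the classical $2$-approximation for weighted minimum vertex cover (the primal--dual / LP-rounding algorithm), obtaining a cover $D$, then leave every pebble starting in $V(G)\setminus D$ fixed (this set is a clique of $G$) and move every pebble starting in $D$ onto one vertex of $V(G)\setminus D$ (or onto an arbitrary vertex if that set is empty). The image is contained in a clique and the number of moved pebbles is $\varphi(D)\le 2\min\{\varphi(C):C\text{ is a vertex cover of }\overline{G}\}=2\,\mathrm{OPT}$, as required.

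For the inapproximability I would run the correspondence backwards: given a graph $H$, an instance of minimum vertex cover, take $G=\overline{H}$ and place exactly one pebble on each vertex, so that $\varphi\equiv 1$ and $\overline{G}=H$. By the identity above, $\mathrm{OPT}$ on this instance equals the minimum vertex cover size $\tau(H)$, and any feasible solution moving $m$ pebbles leaves $n-m$ pebbles on a clique of $\overline{H}$, i.e.\ on an independent set of $H$, whose complement is a vertex cover of $H$ of size $m$. Hence a polynomial-time $\rho$-approximation for $\clique$-$\mnum$ yields a polynomial-time $\rho$-approximation for minimum vertex cover, which for every constant $\rho<10\sqrt{5}-21$ is impossible unless $\P=\NP$ \cite{dinur2005hardness}; the same bound therefore holds for $\clique$-$\mnum$.

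The step I expect to require the most care is the first one: proving that the optimum is \emph{exactly} $k-\max_Q\varphi(Q)$. The inequality ``$\le$'' needs the check that consolidating all moved pebbles onto a single occupied vertex never destroys feasibility (the image stays inside a clique), and the inequality ``$\ge$'' relies on the fact that the fixed pebbles are forced to occupy a clique of $G$. Everything downstream is a routine translation through the complement plus the textbook weighted vertex-cover $2$-approximation and the Dinur--Safra hardness result; the only subtlety is to make sure the reduction preserves the ratio verbatim, with no additive slack, so that the constant $10\sqrt{5}-21$ carries over.
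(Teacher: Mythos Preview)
Your approach is correct and rests on the same core idea as the paper: identify the optimum of $\clique$-$\mnum$ with a minimum vertex cover on the complement, and transfer both the $2$-approximation and the Dinur--Safra lower bound through that identification. The difference lies in how you handle vertices carrying several pebbles. The paper first reduces to the case $\varphi\le 1$ by replacing each heavily loaded vertex with a small clique (so that the complement remains an ordinary unweighted graph and the standard unweighted vertex-cover $2$-approximation applies); you instead keep $G$ intact, pass the multiplicities as vertex weights to $\overline{G}$, and invoke the \emph{weighted} vertex-cover $2$-approximation. Your route is more direct and avoids the auxiliary graph surgery, at the cost of needing the weighted rather than unweighted $2$-approximation; the paper's route keeps everything unweighted but pays with an extra preprocessing step.

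One small point you should patch in the hardness direction: the problem definition requires $G$ to be connected, but $\overline{H}$ need not be for an arbitrary vertex-cover instance $H$. The paper handles this by adjoining a single isolated vertex to $H$ before complementing (so that the new vertex becomes universal in $G$ and carries no pebble); this leaves $\tau(H)$, and hence the optimum of the resulting $\clique$-$\mnum$ instance, unchanged while forcing $G$ to be connected. With that tweak your argument goes through verbatim.
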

	\begin{proof}
		Let $\langle G, P, \sigma \rangle$ be an instance of $\clique$-$\mnum$ and let $\varphi(u) = |\{p \in P : \sigma(p)=u \}|$ be the number of pebbles that are initially placed on vertex $u \in V$.
		Let us assume that $\varphi(u) \le 1$ for every $u \in V$, i.e., no two pebbles are placed on the same vertex. We will show later that this assumption is not restrictive.
		Call $H$ the graph induced by the vertices $u$ with $\varphi(u)=1$.
		Let $\bar{H}$ be the complement graph of $H$ w.r.t. the edge set, that is the graph such that $V(\bar{H})=V(H)$ and $E(\bar{H}) = \{ (u,v) : u,v \in V(H) \wedge u \not= v \wedge (u,v) \not\in E(H) \}$.
		We will show that there exists a vertex cover $\mathcal{C}$ for $\bar{H}$ if and only if there exists a solution for the instance $\langle G, P, \sigma \rangle$ of $\clique$-$\mnum$ of cost $|\mathcal{C}|$.

		Let $\mathcal{C}$ be a vertex cover for $\bar{H}$, this implies that $Q=V(H)-\mathcal{C}$ is an independent set for $\bar{H}$, and therefore a clique for $H$ and $G$. We construct a solution $\mu$ for $\clique$-$\mnum$ by moving the $|\mathcal{C}|$ pebbles that are not yet placed on vertices in $Q$ to one of such vertices.
		Now let $\mu$ be a solution for the instance $\langle G, P, \sigma \rangle$ of $\clique$-$\mnum$ and let $Q=\{u \in V(H) : \exists p \in P \mbox{ s.t. } \sigma(p)=\mu(p)=u\}$. Notice that the cost of $\mu$ is exactly $k-|Q|=|V(H)-Q|$. Clearly $Q \subseteq \mu[P]$ is a clique for $G$ and $H$, therefore it is also an independent set for $\bar{H}$. This implies that $\mathcal{C}=V(H)-Q$ is a vertex cover for $\bar{H}$.

		From the above it follows that the cost of an optimal solution is equal to the size of the minimum vertex cover for $\bar{H}$.
		To approximate $\clique$-$\mnum$ we construct the graph $\bar{H}$, we compute a $2$-approximate minimum vertex cover $\mathcal{\tilde{C}}$, and then we reconstruct the solution $\mu$ as shown.

		If the previous assumption is not met, i.e., there exists at least a vertex on which two or more pebbles are placed by $\sigma$, a slight modification to the instance is needed before we can apply the previous approach. We modify the graph $G$ by replacing each vertex $u$ such that $\varphi(u)>1$ with a clique of size $\varphi(u)$. Each edge $e$ incident to $u$ is replaced by $\varphi(u)$ edges connecting every vertex of the clique to the other endpoint of $e$. Then, we modify the function $\sigma$ so that the $\varphi(u)$ pebbles that were placed on $u$ are assigned to each of the $\varphi(u)$ vertices of the corresponding clique. After the modifications, the cost of an optimal solution has not changed, moreover every solution for the modified instance can be easily reconverted to a solution for the original instance without increasing its cost. Hence, we have designed a $2$-approximation algorithm for $\clique$-$\mnum$.
			
		Concerning the inapproximability result, we prove it by contradiction: we show that if there exists an algorithm that approximates $\clique$-$\mnum$ with a ratio better than $10\sqrt{5}-21$, then this would allow to approximate minimum vertex cover with the same approximation ratio. Indeed, let $\bar{H}$ be the instance of minimum vertex cover. Now, add an isolated vertex $u$ and
call $G$ the complement of such a graph w.r.t. the edge set. Then, let $P=[|V(G)|-1]$, and let $\sigma$ be a function that places a single pebble on each vertex of $G$ except $u$.
		We can now compute an approximate solution for the instance $\langle G, P, \sigma \rangle$ of $\clique$-$\mnum$ and reconstruct a solution with the same cost for minimum vertex cover, as shown before.\footnote{Notice that the additional vertex $u$ is only instrumental to guarantee that $G$ is connected.}
	\qed\end{proof}

	We close this section by proving that $\clique$-$\mnum$ can be solved in polynomial time whenever we are able to solve the maximum-weight clique problem on a weighted variant of $G$.
	We remark that this is known to be the case of several classes of graphs, including perfect graphs. We refer the interested reader to \cite{bomze1999maximum} for a survey.

	\begin{theorem}
		An exact solution to $\clique$-$\mnum$ can be found in polynomial time on every class of graphs where a maximum-weight clique can be found in polynomial time.
	\end{theorem}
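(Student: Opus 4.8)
The plan is to reduce $\clique$-$\mnum$ to a single instance of the maximum-weight clique problem on $G$ itself, suitably weighted. The guiding observation is that in any feasible solution $\mu$, every pebble $p$ that is \emph{not} moved satisfies $\sigma(p)=\mu(p)\in\mu[P]$; since $\mu[P]$ induces a clique of $G$, the set of start vertices of the unmoved pebbles is itself a clique of $G$. Conversely, given any clique $C$ of $G$ one obtains a feasible solution by leaving untouched every pebble whose start vertex lies in $C$ and sending every other pebble to a fixed vertex of $C$: the image of $P$ is then a subset of $C$, hence induces a clique, so the solution is feasible.

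Next I would make this correspondence quantitative, reusing the notation $\varphi(u)=|\{p\in P:\sigma(p)=u\}|$ from the previous proof. The solution associated with a clique $C$ as above leaves exactly $\sum_{u\in C}\varphi(u)$ pebbles unmoved, so its cost is $k-\sum_{u\in C}\varphi(u)$. On the other hand, for an arbitrary feasible $\mu$ with $C=\mu[P]$, the number of unmoved pebbles is at most $\sum_{u\in C}\varphi(u)$ (each unmoved pebble sits on a vertex of $C$, and a vertex $u$ carries at most $\varphi(u)$ pebbles), hence $\mnum(\mu)\ge k-\sum_{u\in C}\varphi(u)$. Putting the two directions together, the optimal value of $\clique$-$\mnum$ equals $k$ minus the maximum of $\sum_{u\in C}\varphi(u)$ over all cliques $C$ of $G$, i.e., $k$ minus the weight of a maximum-weight clique of $G$ when each vertex $u$ is assigned weight $\varphi(u)$.

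The algorithm is then immediate: assign weight $\varphi(u)$ to every $u\in V(G)$, call the assumed polynomial-time maximum-weight clique procedure to get a clique $C^*$, and output the solution that keeps every pebble with start vertex in $C^*$ fixed and moves all remaining pebbles onto a single vertex of $C^*$. One should note that $C^*$ is non-empty whenever $k\ge 1$ (any single vertex holding a pebble forms a clique of positive weight, so a maximum-weight clique has positive weight), which guarantees a valid destination for the moving pebbles; the case $k=0$ is trivial. Building the weighted graph and reconstructing $\mu$ cost $O(n+k)$ time, so the total running time is polynomial plus one invocation of the clique oracle.

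I do not expect a real obstacle here: the only points needing a little care are the two inequalities in the quantitative step (that at most $\varphi(u)$ pebbles can stay on $u$, and that the start vertices of the unmoved pebbles lie inside the clique $\mu[P]$), together with the remark that the optimal weighted clique is non-empty so the construction of $\mu$ is well defined. Everything else is bookkeeping.
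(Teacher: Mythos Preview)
Your proposal is correct and follows essentially the same approach as the paper: assign weight $\varphi(u)$ to each vertex, compute a maximum-weight clique, and move all pebbles whose start vertex lies outside this clique onto a vertex of it. If anything, your argument is slightly more careful than the paper's, since you explicitly observe that the number of unmoved pebbles in an arbitrary feasible $\mu$ is only \emph{at most} $\sum_{u\in\mu[P]}\varphi(u)$ (the paper writes this as an equality), and you address the non-emptiness of the optimal clique.
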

	\begin{proof}
		For every $v \in V$ let $\varphi(v)$ be the number of pebbles starting on vertex $v$, i.e., $\varphi(v)=|\{p \in P : \sigma(p)=v\}|$.
		
		We consider a graph $H=(V,E)$ that is a copy of $G$ where each vertex $v \in V$ has a weight $\varphi(v)$.	Then, we compute a maximum-weight clique $Q$ of $H$ and construct a solution $\mu$ for $\clique$-$\mnum$ by moving every pebble starting on a vertex in $V \setminus Q$ to an arbitrary vertex of $Q$.\footnote{W.l.o.g. we assume that $Q$ contains no vertices of weight $0$.} It is now easy to see that this is an optimal solution, since for every other solution $\mu^\prime$, we have:
		\[
			c(\mu^\prime) = \sum_{v \not\in \mu^\prime[P]} \varphi(v) =
			k-\sum_{v \in \mu^\prime[P]} \varphi(v) \ge k-\sum_{v \in Q} \varphi(v) =
			\sum_{v \not\in \mu[P]} \varphi(v) = c(\mu).
		\]		
	\qed\end{proof}		
		
\subsection{Approximability of $\clique$-$\msum$}

	We prove the following:
	\begin{theorem}
		$\clique$-$\msum$ is $2$-approximable and is not approximable within any constant factor better than $10\sqrt{5}-21$, unless $\P=\NP$.
	\end{theorem}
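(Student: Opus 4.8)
The plan is to mirror the treatment of $\clique$-$\mnum$ and tie $\clique$-$\msum$ to minimum vertex cover. For the hardness half I would reuse \emph{verbatim} the reduction behind Theorem~\ref{thm:2-apx-mov-clique-num}: starting from an instance $\bar H$ of minimum vertex cover, add an isolated vertex, let $G$ be the complement of the resulting graph (so that vertex becomes universal in $G$ and $G$ is connected), and place exactly one pebble on each vertex of $\bar H$ and none on the universal vertex. As in the $\clique$-$\mnum$ case, a subset of $V(G)$ is a clique iff it is an independent set of $\bar H$, possibly together with the universal vertex. The point is that the overall-movement optimum of this instance is again exactly $\tau(\bar H)$, the minimum vertex cover number of $\bar H$: on one hand, in any feasible solution the vertices carrying an unmoved pebble form a clique of $G$, hence an independent set of $\bar H$, so at least $\tau(\bar H)$ pebbles move and each contributes at least $1$ to the cost; on the other hand, keeping in place the pebbles on a maximum independent set $Q$ of $\bar H$ and pushing every remaining pebble one single step onto the universal vertex produces a clique configuration of cost precisely $|V(\bar H)|-|Q|=\tau(\bar H)$. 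Hence any $\rho$-approximation for $\clique$-$\msum$ yields, by reading off which pebbles stayed in place, a $\rho$-approximate vertex cover of $\bar H$, so the $10\sqrt5-21$ inapproximability follows from \cite{dinur2005hardness} and the conditional tightness of $2$ from \cite{khot2008vertex}.

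For the $2$-approximation algorithm, I would first dispose of multiplicities $\varphi(v)>1$ by the vertex-blow-up trick used in Theorem~\ref{thm:2-apx-mov-clique-num}, so assume $\varphi(v)\le 1$ everywhere. The algorithm iterates over all $w\in V(G)$: for each $w$ it restricts to the set $B_w$ of occupied vertices lying in the closed neighbourhood $N_G[w]$, forms the complement $\bar H_w$ of $G[B_w]$, computes a $2$-approximate minimum vertex cover $\tilde{\mathcal C}_w$ of $\bar H_w$, sets $\tilde Q_w=B_w\setminus\tilde{\mathcal C}_w$ (a clique of $G$ contained in $N_G[w]$), and builds the solution $\tilde\mu_w$ that leaves the pebbles on $\tilde Q_w$ untouched and moves every other pebble onto $w$; this is feasible since $\tilde Q_w\cup\{w\}$ is a clique. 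The output is the cheapest $\tilde\mu_w$, computable in polynomial time.

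The crux is the analysis. Fix an optimal solution with target clique $C^*$ and cost $c^*$, pick any $w\in C^*$, and note $C^*\subseteq N_G[w]$. Split the pebbles into \emph{near} ones (starting inside $N_G[w]$) and \emph{far} ones. A far pebble $p$ is moved by the optimum into $C^*\subseteq N_G[w]$, hence a distance $\ge d_G(\sigma(p),w)-1\ge 1$, whereas $\tilde\mu_w$ moves it a distance $d_G(\sigma(p),w)\ge 2$, losing at most a factor $2$ (since $d\le 2(d-1)$ for $d\ge2$). For the near pebbles, $C^*\cap B_w$ is a clique of $G[B_w]$, so the set of vertices carrying a near pebble moved by the optimum is a vertex cover of $\bar H_w$; therefore $|\tilde{\mathcal C}_w|$ is at most twice the number of near pebbles moved by the optimum, while $\tilde\mu_w$ moves each such pebble a distance of exactly $1$ and each near pebble moved by the optimum already contributes at least $1$ to $c^*$. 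Adding the near and far contributions gives $c(\tilde\mu_w)\le 2c^*$, and the returned solution is at least as good.

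The main obstacle is this near/far decomposition: the naive ``gather everything at the best single vertex'' is not a constant-factor approximation (it costs $\Theta(k)$ whenever the pebbles already form a large clique), so one genuinely needs to delegate the ``which near pebbles stay'' decision to a vertex-cover subroutine on $\bar H_w$. The factor $2$ then comes cleanly from that subroutine on the near side and from the $d\le 2(d-1)$ slack on the far side, and everything else (feasibility of $\tilde\mu_w$, the weighted-to-unweighted vertex cover reduction, polynomial running time) is routine.
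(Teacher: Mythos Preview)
Your proposal is correct and follows essentially the same approach as the paper. The inapproximability reduction is identical, and for the $2$-approximation both you and the paper guess a vertex of the optimal clique, handle pebbles at distance $\ge 2$ via the triangle-inequality slack $d\le 2(d-1)$, and delegate the decision of which distance-$\le 1$ pebbles to keep in place to a $2$-approximate vertex-cover (equivalently, $\clique$-$\mnum$) subroutine on the local neighbourhood. Your write-up is slightly more streamlined in that you iterate over \emph{all} $w\in V(G)$ and give a single unified analysis, whereas the paper splits into the case where every pebble moves (then gathering at any $u\in\mu^*[P]$ suffices) and the case where some pebble $p'$ stays put (then $u=\sigma(p')$ is the guessed vertex); the underlying arithmetic is the same.
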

	\begin{proof}
		Let $\mu^*$ be an optimal solution to $\clique$-$\msum$. If $\mu^*$ moves all the pebbles (i.e., $\sigma(p) \not= \mu^*(p), \; \forall p \in P$) then we can compute a $2$-approximate solution $\tilde{\mu}$ by guessing a vertex $u \in \mu^*[P]$ and moving all the pebbles to $u$ (i.e., setting $\tilde{\mu}(p)=u, \; \forall p \in P$).  Indeed, we have:
		\[
			\frac{c(\tilde{\mu})}{c(\mu^*)} = \frac{\sum_{p \in P}d_G(\sigma(p), \tilde{\mu}(p))}{\sum_{p \in P}d_G(\sigma(p), \mu^*(p))} \le \frac{|P| + \sum_{p \in P}d_G(\sigma(p), \mu^*(p))}{\sum_{p \in P}d_G(\sigma(p), \mu^*(p))} \le 2
		\]
		where we used the fact that $d_G(\sigma(p), \tilde{\mu}(p)) \le d_G(\sigma(p), \mu^*(p))+1$, and that $d_G(\sigma(p), \mu^*(p)) \ge 1$, for every pebble $p \in P$.
		
		On the other hand, if there exists at least one pebble $p^\prime \in P$ such that $\sigma(p^\prime)=\mu^*(p^\prime)$, then we guess its starting vertex $u=\sigma(p^\prime)$.
		We call $P_0$ the set of pebbles whose starting vertex is $u$, $P_1$ the set of pebbles whose starting vertex is adjacent to $u$, and $P_{2}$ the set of pebbles that are initially placed on a vertex at distance $2$ or more from $u$.
		We then set $\tilde{\mu}(p)=u$ if $p \in P_0$ or $p \in P_{2}$. With a reasoning similar to the one of the previous case we can show that:
		\[
			\sum_{p \in P_{2}} d_G(\sigma(p), \tilde{\mu}(p))  \le |P_{2}| + \sum_{p \in P_{2}} d_G(\sigma(p), \mu^*(p)) \le 2 \sum_{p \in P_{2}} d_G(\sigma(p), \mu^*(p)) \mbox{.}
		\]
		 Concerning $P_1$, assume $P_1\not=\emptyset$, and so we need to compute $\tilde{\mu}$ for the pebbles in $P_1$. To do that, consider the instance $\langle H, P_1, \sigma \rangle$ of $\clique$-$\mnum$ where $H$ is the subgraph of $G$ induced by the vertices initially occupied by pebbles in $P_1 \cup \{ p^\prime \}$, and compute a $2$-approximate solution $\mu^\prime$ as shown in Theorem \ref{thm:2-apx-mov-clique-num}. Set $\tilde{\mu}(p)=\sigma(p)$ for every pebble $p \in P_1$ such that $\mu^\prime(p)=\sigma(p)$, and set $\tilde{\mu}(p)=u$ for the remaining pebbles in $P_1$.
		
		Clearly $\tilde{\mu}[P]$ is a clique for $G$ as the vertices in $\mu^\prime[P_1]$ are a clique for $G$, $u$ is adjacent to every vertex in $\mu^\prime[P_1]$, and $\tilde{\mu}[P] \subseteq \mu^\prime[P_1] \cup \{ u \}$.

		Notice that the cost of moving the pebbles in $P_1$ w.r.t. $\mu^*$ is greater than or equal to the cost of the optimal solution for the instance $\langle H, P_1, \sigma \rangle$ of $\clique$-$\mnum$. Moreover, the cost of moving the pebbles in $P_1$ w.r.t. $\tilde{\mu}$ is equal to the cost of $\mu^\prime$. From the above it follows that:
		\[
			\sum_{p \in P_1} d_G(\sigma(p), \tilde{\mu}(p)) \le 2 \sum_{p \in P_1} d_G(\sigma(p), \mu^*(p)) \mbox{.}
		\]
		
		\noindent Therefore, the overall cost of this approximated solution is:
		\[
			c(\tilde{\mu}) \le 2 \sum_{p \in P_1} d_G(\sigma(p), \mu^*(p)) + 2 \sum_{p \in P_{2}} d_G(\sigma(p), \mu^*(p)) \le 2 \,c(\mu^*) \mbox{.}
		\]

		Concerning the inapproximability result, take a graph $H=(V, E)$ and construct the graph $G$ by complementing $H$ w.r.t. the edge set and adding an additional vertex $v_0$ adjacent to every other vertex. Let $P=[|V(H)|]$ and let $\sigma$ be a function that places one pebble on each vertex of $G$ except $v_0$. We will show that, given any solution for $\clique$-$\msum$, it is possible to construct a vertex cover of $H$ having the same cost, and vice versa. This implies that any approximation algorithm for $\clique$-$\msum$ converts into an approximation algorithm for minimum vertex cover with the same approximation ratio, therefore no approximation algorithm with an approximation ratio less than $10\sqrt{5}-21$ can exist for $\clique$-$\msum$ \cite{dinur2005hardness}, unless $\P=\NP$.
	
	Let $\mathcal{C}$ be a vertex cover for $H$, then $V(H)-\mathcal{C}$ is an independent set for $H$ and a clique for $G$. The solution that moves all the pebble of $\mathcal{C}$ to $v_0$ and leaves the others on their starting position is feasible and has a cost of $\mathcal{C}$.
	
	Now let $\mu$ be a solution for the instance of $\clique$-$\msum$. We modify $\mu$ in such a way that every pebble that moves at least by $1$ is now moving to $v_0$. This modification cannot increase the cost of $\mu$. Then, let $Q= \mu[P] \setminus \{ v_0 \}$, and let $\mathcal{C}=V(H) \setminus Q$. The cost of $\mu$ is $|V(H)|-|Q|=|\mathcal{C}|$, and $Q \subseteq \mu[P]$ is a clique for $G$. Therefore $Q$ is also an independent set for $H$, and $\mathcal{C}$ is a vertex cover for $H$.
	\qed\end{proof}

	\section{$\stcut$ motion problems}
\label{sec:stcut}

	In this section we discuss (in)approximability results for $\stcut$-$\mmax$ and $\stcut$-$\msum$. Among the others, we provide an essentially tight approximation algorithm for $\stcut$-$\mmax$. Regarding $\stcut$-$\mnum$, establishing its tractability remains open, but we will show that approximating such a problem can be useful to approximate $\stcut$-$\msum$, as well. We start by proving the following:

\FloatBarrier

\begin{figure}[t]
	\centering
	\includegraphics[scale=1.5]{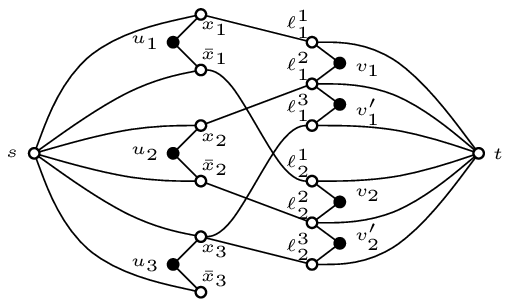}
	\caption{Instance of $\stcut$-$\mmax$ corresponding to the formula $(x_1 \vee x_2 \vee x_3) \wedge (\bar{x}_1 \vee \bar{x}_2 \vee x_3)$. Pebbles are placed on black vertices. All the edges not incident to a black vertex represent paths of length $h$ between their endpoints.}
	\label{fig:cut_max_hard}
\end{figure}

	\begin{theorem}
		\label{thm:cut_max_sum_hard_bipartite}
		$\stcut$-$\mmax$ and $\stcut$-$\msum$ are \NP-hard even if $G$ is a bipartite graph.
	\end{theorem}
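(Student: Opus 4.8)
The plan is to mimic the reduction used in Theorem~\ref{thm:ind_max_sum_hard_bipartite}, again starting from \textsc{3-Sat}, but now encoding a truth assignment through the \emph{removal} of vertices (a cut) rather than through the \emph{placement} of pebbles on an independent set. Given a formula $f$ with variables $x_1,\dots,x_\tau$ and clauses $C_1,\dots,C_m$, I would build a graph $G$ together with two designated vertices $s$ and $t$ such that every $s$-$t$ path is forced to pass through a ``variable selector'' or through a ``clause checker'', and such that cutting all $s$-$t$ paths cheaply corresponds exactly to picking, for each variable, one of the two literal-vertices $x_i,\bar x_i$ to delete (the ``false'' literal), while the clause gadget can be cut for free precisely when at least one of its literals has been set to true. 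Concretely, for each variable I would place a pebble adjacent to the pair $\{x_i,\bar x_i\}$ so that moving it one step deletes exactly one of the two; for each clause $C_j$ I would route an $s$-$t$ path through a vertex $c_j$ that is adjacent to the three literal-vertices occurring in $C_j$, so that $c_j$ gets disconnected from the ``dangerous'' side for free as soon as one of those literal-vertices is already removed. All the genuinely long connections in the construction (the edges not incident to a pebble-bearing black vertex) are replaced by internally-disjoint paths of length $h$ for a suitably large polynomial $h$; this is exactly the device shown in Figure~\ref{fig:cut_max_hard}, and it guarantees that no pebble can afford to traverse such a long path, so that in any solution of small cost every pebble moves by at most $1$ and the only ``choices'' available are the local ones described above.

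The key steps, in order, are: (i) give the explicit gadget construction and verify it is polynomial-size and bipartite (the long paths of length $h$ can be made even, and the pebble edges can be oriented consistently across the bipartition exactly as in the proof of Theorem~\ref{thm:ind_max_sum_hard_bipartite}); (ii) prove the ``if'' direction: from a satisfying assignment $\theta^*$, move each variable pebble so as to delete the literal-vertex made false by $\theta^*$, and move each clause pebble (if one is needed) to certify a satisfied literal, checking that the resulting set of deleted vertices really is an $s$-$t$-cut and that every pebble moved by at most $1$ (so $\mmax\le 1$ and $\msum\le\tau+m$, or whatever the exact bookkeeping gives); (iii) prove the ``only if'' direction: from a solution of cost at most $1$ for $\stcut$-$\mmax$ (resp.\ at most the target value for $\stcut$-$\msum$), argue via the length-$h$ padding that each pebble must stay within its local gadget, deduce that the deleted set induces a consistent assignment, and then show that a clause whose three literals are all false would leave an uncut $s$-$t$ path through $c_j$, a contradiction; conclude that $\theta$ satisfies $f$. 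Finally (iv) observe that the two reductions (for $\mmax$ and for $\msum$) are the same construction read with two different objective functions, exactly as in Theorem~\ref{thm:ind_max_sum_hard_bipartite}, so a single argument handles both.

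The main obstacle I expect is getting the $s$-$t$ connectivity bookkeeping exactly right: unlike the independent-set reduction, where feasibility is a purely local condition, here feasibility (``is this an $s$-$t$-cut?'') is a global property of the graph after vertex deletions, so I must be careful that the gadgets cannot interact in unintended ways — e.g.\ that one cannot ``cheat'' by cutting a clause gadget through some path that has nothing to do with the literals, or by exploiting the long $h$-paths themselves to form an alternative cut more cheaply than the intended one. The right way to control this is to make the length-$h$ paths ``unbreakable'' within budget (any pebble that would have to cut one would incur cost $\Theta(h)\gg$ target), so that the only vertices a cheap solution can afford to delete are the $O(\tau+m)$ black-adjacent literal/clause vertices, and then to design the gadget so that the unique cheap $s$-$t$-cuts are exactly the ``assignment-plus-certificate'' sets. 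Verifying this uniqueness — and simultaneously that $s$ and $t$ themselves are never among the deletable vertices, as required by the definition of $\stcut$ — is the part that needs the most care, but it is of the same flavour as the argument already carried out for $\ind$-$\mmax$.
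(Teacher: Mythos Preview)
Your high-level plan matches the paper's approach closely: a \textsc{3-Sat} reduction with per-variable and per-clause gadgets, padding all ``structural'' edges with length-$h$ paths so that any small-cost solution can only make local moves, and reading off both the $\mmax$ and $\msum$ hardness from the same construction. The identification of the main obstacle (global cut feasibility versus local gadget moves) is also accurate.

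The genuine gap is in your clause gadget. You propose to delete the literal-vertex made \emph{false} by the assignment and claim that a clause vertex $c_j$ adjacent to its three literal vertices ``gets disconnected from the dangerous side for free as soon as one of those literal-vertices is already removed.'' But if $c_j$ has three parallel neighbours on one side, removing a single one still leaves the other two; and more importantly the logical direction is backwards: under your convention, the literal-vertices that get removed are exactly those of the \emph{false} literals, so an \emph{unsatisfied} clause (all three literals false, all three vertices removed) would be the one whose path is cut for free, while a fully satisfied clause would leave all three neighbours of $c_j$ intact. You need the opposite behaviour: the clause path must be cuttable within budget iff at least one literal is \emph{true}. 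No single-vertex star $c_j$ achieves this with one pebble.

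The paper fixes this with a different clause gadget and the opposite convention. Each clause gets a $5$-vertex path $\ell^1_j\!-\!v_j\!-\!\ell^2_j\!-\!v'_j\!-\!\ell^3_j$ carrying \emph{two} pebbles (on $v_j$ and $v'_j$); each $\ell^i_j$ is a fresh vertex joined by a long path to the variable vertex of the \emph{same} sign; $s$ is joined by long paths to every $x_i,\bar x_i$ and $t$ to every $\ell^i_j$. The variable pebble moves to the vertex of the \emph{true} value (so the true side is in the cut), and the two clause pebbles then cover the at most two false literal vertices---which is exactly what a satisfied clause guarantees. This yields $k=\tau+2m$ pebbles and a $\msum$ target of $k$, not $\tau+m$. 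For the converse, every $s$-$t$ path must use some long path from an $x\in\{x_i,\bar x_i\}$ to some $\ell^i_j$; if $x$ is not covered then $\ell^i_j$ must be, and since only two clause pebbles are available per clause, some literal vertex is uncovered, forcing its variable vertex to be covered and hence the literal to be true. Keeping the $\ell^i_j$ separate from the variable vertices (rather than sharing them across clauses) is precisely what prevents the unintended ``hybrid'' $s$-$t$ paths you were worried about.
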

	\begin{proof}
		We will show a reduction from the decision version of $\textsc{3-Sat}$.
	Let $X=\{x_1, \dots, x_\tau\}$ be the set of variables of the given formula $f$, and let $m$ be the number of clauses.
		Start with an empty graph $G$ and a set $P$ of $\tau + 2m$ pebbles, then construct an instance for $\stcut$-$\mmax$ and $\stcut$-$\msum$ as follows (see Figure \ref{fig:cut_max_hard} for an example):
		\begin{itemize}
			\item Add the two vertices $s$ and $t$.
			\item For each variable $x_i \in X$ add a path of three vertices to $G$. Label the middle node $u_i$ and the two endpoints $x_i$ and $\bar{x}_i$, respectively. Place a pebble on $u_i$.
			\item For each clause $(\ell^1_j \vee \ell^2_j \vee \ell^3_j)$ create a path of five vertices labelled, from one endpoint to another, $\ell^j_1, v_j, \ell^j_2, v_j^\prime, \ell^j_3$. Place a pebble on $v_j$ and one on $v_j^\prime$.
			\item For each literal $\ell^i_j$ of $f$, let $x_s$ be the corresponding variable. If $\ell^i_j$ is asserted add a new ``long'' path of length $h> k = t + 2 m$ between the vertices labelled $\ell^i_j$ and $x_s$ (so that $d_G(\ell^i_j,x_s)=h$). Otherwise add a new ``long'' path between the vertices labelled $\ell^i_j$ and $\bar{x}_s$.
			\item Connect $s$ to every vertex $x_i$ and to every vertex $\bar{x}_i$ with a new ``long'' path of length $h$.
			\item Connect $t$ to every vertex $\ell^j_i$ with a new ``long'' path of length $h$.
		\end{itemize}

		\noindent Notice that $G$ is bipartite as every cycle in $G$ must pass trough an even number of long paths (paths of length $h$) and an even number of other edges.
		
		We will show that there exists a truth assignment satisfying $f$ if and only if an optimal solution for $\stcut$-$\mmax$ (resp., $\stcut$-$\msum$) has cost at most $1$ (resp., $k$).
		
		Suppose that there exists a truth assignment $\theta^* : X \to \{ \True,\False \}$ that satisfies $f$. Then, for each $x_i \in X$ we move the pebble placed on $u_i$ to $x_i$ if $\theta^*(x_i)=\True$, and to $\bar{x}_i$ if $\theta^*(x_i)=\False$. Moreover, for each clause $(\ell^1_j \vee \ell^2_j \vee \ell^3_j)$ there are at most $2$ literals whose truth values are false w.r.t. $\theta^*$. We move one or two of the pebbles placed on $v_j$ and $v^\prime_j$ to the nodes of $G$ corresponding to the those literals.
				
		Notice that in this way each pebble travels at most a distance of $1$ (so the sum of the distances is at most $k$).  To prove that the new positions of the pebbles induce an \emph{s-t}-cut in $G$, consider any path $\pi(s,t)$ in $G$ between $s$ and $t$. We will show that $\pi(s,t)$ has been ``blocked'', i.e, there exists at least one pebble that has been moved on some vertex of $\pi(s,t)$.
		
		The path $\pi(s,t)$ must contain (as subpath) at least one path of length $h$ that connects a vertex $x \in \{x_i, \bar{x}_i\}$ for (some $i$) to a vertex $\ell \in \{\ell^1_j, \ell^2_j, \ell^3_j\}$ (for some $j$).
		If a pebble has been placed on $x$ then $\pi(s,t)$ is blocked. Otherwise the literal represented by $\ell$ must be false and, by construction, a pebble has been placed on $\ell$.

		Now suppose that an optimal solution for $\stcut$-$\mmax$ (resp., $\stcut$-$\msum$) has cost at most $1$ (resp., $k$). Notice that every pebble must have travelled a distance of at most $1$. This is trivial for $\stcut$-$\mmax$, while for $\stcut$-$\msum$ it suffices to note that placing a pebble on an internal vertex of the paths of length $h$ blocks at most the same set of paths that are blocked by placing it on one of the endpoints. Moreover no pebble can traverse a whole path as $h > k$.
		
		We define $\theta(x_i)=\True$ if a pebble has been placed on $x_i$, and $\theta(x_i)=\False$ otherwise. For each clause $(\ell^1_j \vee \ell^2_j \vee \ell^3_j)$ there exists at least one vertex $\ell \in \{\ell^1_j, \ell^2_j, \ell^3_j \}$ such that no pebble has been moved to $\ell$.
		Let $x_i$ be the variable associated with $\ell_i$, and let $x$ be the vertex labelled $x_i$ if $\ell = x_i$, or the vertex labelled $\bar{x}_i$ if $\ell = \bar{x}_i$. The pebble placed on $u_i$  must have been moved to $x$, otherwise there would exist a path from $s$ to $t$ passing through $x$ and $\ell$, and therefore the clause is satisfied.
	\qed\end{proof}		

	We now show that $\stcut$-$\mmax$ and $\stcut$-$\msum$ are actually very hard to approximate:
	\begin{theorem}
		$\stcut$-$\mmax$ and $\stcut$-$\msum$ are not approximable within a factor of $n^{1-\epsilon}$ for every $\epsilon>0$, unless $\P=\NP$. This also holds for bipartite graphs.
	\end{theorem}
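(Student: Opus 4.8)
The plan is to amplify the reduction of Theorem~\ref{thm:cut_max_sum_hard_bipartite} by treating the length $h$ of the long paths as a free parameter (kept polynomially bounded) rather than just $h>k$, and then to argue that the gap between the \textsc{Yes} and \textsc{No} cases, which was only a constant before, in fact scales with $h$. Concretely, I would reuse verbatim the graph $G$, the pebble set $P$ and the placement $\sigma$ obtained from a \textsc{3-Sat} formula $f$ with $\tau$ variables and $m$ clauses, but with $h$ set to a sufficiently large polynomial in $\tau+m$ (of the parity that keeps $G$ bipartite, exactly as in Theorem~\ref{thm:cut_max_sum_hard_bipartite}). The construction of a satisfying assignment carried out there moves every pebble by distance $1$ regardless of $h$, so if $f$ is satisfiable there is a solution of cost $1$ for $\stcut$-$\mmax$ and of cost at most $k=\tau+2m$ for $\stcut$-$\msum$. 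It therefore remains to prove that if $f$ is \emph{unsatisfiable} then every feasible solution moves some pebble by at least $h$; this immediately gives $\mmax\ge h$ and $\msum\ge h$, hence a multiplicative gap of $h$ (resp.\ $h/k$).

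To prove the \textsc{No}-case bound I would argue by contradiction: assume a feasible solution $\mu$ exists in which every pebble travels strictly less than $h$, and recover a satisfying assignment. The first ingredient is a normalization step. Since every long path has length $h$ and no pebble starts on an endpoint of a long path, no pebble can traverse a long path entirely, so each pebble ends either on a vertex of its own gadget or strictly in the interior of a long path adjacent to that gadget; moreover the ``far'' endpoint it could reach is never $s$ or $t$. A pebble sitting at an internal vertex of a long path $\pi$ with endpoints $a,b$ blocks exactly the $s$--$t$ paths that traverse $\pi$ from $a$ to $b$ — internal vertices of $\pi$ have degree two — so relocating it to the endpoint it entered from keeps the solution feasible: any $s$--$t$ path created by the move would have had to use all of $\pi$ and is thus still intercepted at that endpoint. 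After all such relocations I may assume the pebble starting on $u_i$ lies on one of $x_i,u_i,\bar x_i$, and the two pebbles starting on $v_j,v_j'$ lie on literal-vertices (or midpoints) of clause $j$, so in particular at most two of $\ell^1_j,\ell^2_j,\ell^3_j$ are occupied.

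Now the accounting of Theorem~\ref{thm:cut_max_sum_hard_bipartite} applies in dual form. For each clause $j$ at least one literal-vertex, say $\ell^r_j$, is unoccupied; let $x_s$ be its variable and $y\in\{x_s,\bar x_s\}$ the variable-vertex joined to $\ell^r_j$ by a long path. The $s$--$t$ path $s\rightsquigarrow y\rightsquigarrow \ell^r_j\rightsquigarrow t$ running only through long paths must be blocked, and since $\ell^r_j$ is free and all long-path interiors are empty, $y$ must carry the pebble that started on $u_s$. Thus every clause forces a value onto one of its variables, and because a single pebble cannot occupy both $x_s$ and $\bar x_s$, no two clauses force conflicting values; setting $\theta(x_s)=\True$ iff the pebble of $u_s$ sits on $x_s$ (and arbitrarily otherwise) yields an assignment satisfying $f$, a contradiction. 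Together with the \textsc{Yes}-case bound this establishes the claimed gap.

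Finally I would do the size bookkeeping: $G$ has $n=\Theta\big((\tau+m)\,h\big)$ vertices, so picking $h$ to be a large enough polynomial in $\tau+m$ makes both $h$ and $h/k$ exceed $n^{1-\epsilon}$ for any fixed $\epsilon>0$; a polynomial-time $n^{1-\epsilon}$-approximation for $\stcut$-$\mmax$ or $\stcut$-$\msum$ would then separate the two cases and decide \textsc{3-Sat}, and the whole construction stays bipartite. I expect the main obstacle to be the normalization/relocation step and the accompanying claim that, once every pebble sits on a gadget vertex, feasibility of the induced cut is \emph{equivalent} to the combinatorial condition used in Theorem~\ref{thm:cut_max_sum_hard_bipartite}; the parameter tuning and the bipartiteness check are routine.
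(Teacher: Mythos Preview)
Your proposal is correct and follows essentially the same approach as the paper. Both arguments reuse the construction of Theorem~\ref{thm:cut_max_sum_hard_bipartite} with $h$ treated as a polynomial parameter, establish the \textsc{Yes}-case bound directly from that theorem, handle the \textsc{No}-case via the same normalization (push every pebble sitting in the interior of a long path back to the endpoint it entered from, which preserves feasibility and reduces cost) to conclude that any solution of cost below $h$ yields a satisfying assignment, and finish with the same size bookkeeping $n=\Theta((\tau+m)h)$ to turn the gap $h$ versus $k$ into an $n^{1-\epsilon}$ inapproximability; you merely spell out the normalization and assignment-recovery steps more explicitly than the paper does.
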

	\begin{proof}
		As shown in the proof of Theorem \ref{thm:cut_max_sum_hard_bipartite}, it is possible to construct instances of $\stcut$-$\mmax$ (and $\stcut$-$\msum$) such that the optimal solution has measure at most $1$ (resp., $k=\tau + 2m$) if and only if a boolean formula in conjunctive normal form with three literals per clause, $\tau$ variables, and $m$ clauses is satisfiable.
		Moreover, any solution with cost $z$ such that $\tau+2m < z < h$ (recall that $h$ is the length of the ``long'' paths) can be easily transformed into a solution of cost at most $\tau + 2m$ by moving every pebble $p$ that has been placed on one long path to an appropriate endpoint (i.e., the one adjacent to $\sigma(p)$).
		This implies that if $f$ is satisfiable, the measure $z^*$ of an optimal solution is at most $k$ for both the problems, while if $f$ is not satisfiable, $z^*$ is at least $h$.				 
				
		Then, given a formula $f$, construct an instance for $\stcut$-$\mmax$ (and $\stcut$-$\msum$) as in the proof of Theorem \ref{thm:cut_max_sum_hard_bipartite}, and suppose that there exists a polynomial-time algorithm that approximates the optimal solution within a factor of $n^{1-\epsilon}$, for some positive $\epsilon \le 1$. Let $\tilde{c}$ be the measure of such an approximate solution. Observe that it is possible to upper bound the number of vertices $n$ of the instance with the quantity $2 + 3\tau + 5m + 2\tau h + 6mh \le 11h(\tau+m) \le 11hk$.

		If $f$ is satisfiable, we have $\tilde{c} \le k n^{1-\epsilon}$, while if $f$ is not satisfiable we have $\tilde{c} \ge h$. If $k n^{1-\epsilon} < h$ holds then it is possible to decide $\textsc{3-Sat}$ by running the approximation algorithm and looking the measure of the approximate solution.
		This can be guaranteed by choosing $h > (11k)^{2/\epsilon}$, as we have
		$h^\epsilon > 11 k^2 \ge 11^{1-\epsilon} k^{2-\epsilon} = k (11 k)^{1-\epsilon}$, which implies $h > k (11 h k)^{1-\epsilon} \ge k n^{1-\epsilon}$.
	\qed\end{proof}	

	By moving all the pebbles on a minimum $s$-$t$-cut of $G$ we can show that the inapproximability result provided above is essentially tight for $\stcut$-$\mmax$:
	\begin{theorem}
		$\stcut$-$\mmax$ is $d$-approximable in polynomial time, where $d < n$ is the diameter of $G$, while $\stcut$-$\msum$ is $(k \cdot d)$-approximable in polynomial time.
	\end{theorem}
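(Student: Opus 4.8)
The plan is to prove both statements at once via the single algorithm hinted at above: compute a \emph{minimum-cardinality} vertex $s$-$t$-cut $C$ of $G$, and then move the pebbles so that the multiset of their final positions is exactly $C$. The whole argument reduces to three observations: such a $C$ can be found in polynomial time, it can always be ``covered'' by the available pebbles whenever the instance is feasible, and each pebble travels at most $d$.

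First I would recall the standard reduction computing a minimum vertex $s$-$t$-cut in polynomial time: split every vertex $v\in V(G)\setminus\{s,t\}$ into an arc $v_{\mathrm{in}}\to v_{\mathrm{out}}$ of capacity $1$, replace each edge of $G$ by a pair of capacity-$\infty$ arcs joining the appropriate split endpoints, and take a minimum $s$-$t$ edge cut of this network by max-flow; its value equals $|C|$ and $C$ consists of the vertices whose unit-capacity arc is saturated. If $s$ and $t$ are adjacent, no vertex $s$-$t$-cut exists and the instance is infeasible. Otherwise, if $\sigma[P]$ is already an $s$-$t$-cut, the algorithm outputs $\mu=\sigma$, of cost $0$ (hence optimal). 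In the remaining case, if the instance is feasible then any feasible solution $\mu^*$ has $\mu^*[P]$ an $s$-$t$-cut, so $|C|\le|\mu^*[P]|\le k$; thus the $k$ pebbles suffice to occupy every vertex of $C$, and fixing any assignment $\mu$ whose image is exactly $C$ yields a feasible solution. Conversely, if $G$ admits no vertex $s$-$t$-cut or $|C|>k$, the instance is infeasible; both checks run in polynomial time.

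For the approximation ratios, note that every pebble moves along a shortest path and hence travels at most $d$ (with $d\le n-1<n$, since $G$ is connected). Therefore $\mmax(\mu)\le d$ and $\msum(\mu)\le k\cdot d$. Whenever $\sigma[P]$ is not already an $s$-$t$-cut, at least one pebble has to move in any feasible solution, so $c^*\ge 1$; combining this with the two bounds above gives $c(\mu)\le d\le d\cdot c^*$ for $\stcut$-$\mmax$ and $c(\mu)\le k\cdot d\le(k\cdot d)\cdot c^*$ for $\stcut$-$\msum$. In the complementary case $\sigma[P]$ is itself an $s$-$t$-cut and the returned solution is optimal, so the claimed ratios hold in all cases.

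The one subtle point, and the step I would be most careful with, is that a \emph{subset} of an $s$-$t$-cut need not itself be an $s$-$t$-cut; hence it is \emph{not} enough to park the pebbles on arbitrary vertices of $C$ --- the final configuration must cover all of $C$. This is precisely why a minimum-cardinality cut is used: feasibility of the instance already exhibits an $s$-$t$-cut of size at most $k$, so $|C|\le k$ and covering $C$ is always possible. Everything else (the flow computation, the distance bounds, reconstructing $\mu$) is routine.
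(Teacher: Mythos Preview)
Your proof is correct and follows essentially the same approach as the paper: check whether the starting configuration is already a cut, otherwise compute a minimum vertex $s$-$t$-cut $C$, place the pebbles on $C$, and use $c^*\ge 1$ together with the trivial bound $d$ on any single pebble's travel. Your write-up is in fact more careful than the paper's on two points --- you spell out why $|C|\le k$ whenever the instance is feasible, and you flag that the pebbles must cover \emph{all} of $C$ rather than an arbitrary subset --- but the underlying argument is identical.
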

	\begin{proof}
		The approximation algorithm is as follows: if the initial position of the pebbles already makes $s$ and $t$ disconnected, then we are done.
		Otherwise, compute (in linear time) a minimum $s$-$t$-cut of $G$, say $C$. If $|C| > k$ this implies that both problems have no solution, otherwise move the pebbles on the vertices of $C$.
		 The optimal measure must be at least $1$, while each of the $k$ pebbles travels a distance of at most $d$.
	\qed\end{proof}	

	We close this section proving a theorem which is useful in linking the approximability of $\stcut$-$\mnum$ to that of $\stcut$-$\msum$:
	\begin{theorem}
		If there exists a $\rho$-approximation algorithm for $\stcut$-$\mnum$ then there exists a $(\rho\cdot d)$-approximation algorithm for $\stcut$-$\msum$, where $d < n$ is the diameter of $G$.
	\end{theorem}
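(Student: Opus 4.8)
The plan is to exploit the fact that an instance of $\stcut$-$\msum$ and an instance of $\stcut$-$\mnum$ differ \emph{only} in the measure function: they share the same graph $G$, the same pebble set $P$, the same start function $\sigma$, and — crucially — the same feasibility predicate $\stcut$. Consequently a map $\mu : P \to V(G)$ is feasible for one problem if and only if it is feasible for the other, and an instance admits a feasible solution for $\stcut$-$\msum$ exactly when it does for $\stcut$-$\mnum$. So, given the hypothetical $\rho$-approximation algorithm $\mathcal{A}$ for $\stcut$-$\mnum$, the algorithm for $\stcut$-$\msum$ is simply to run $\mathcal{A}$ on the (identical) instance and return whatever it outputs: either the report of infeasibility, or the solution $\mu$ it computes. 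This runs in polynomial time assuming $\mathcal{A}$ does.

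For the analysis I would compare three quantities: the $\msum$-cost of the returned solution $\mu$, the optimum $c^*_{\mnum}$ of the instance under the $\mnum$ measure, and the optimum $c^*_{\msum}$ under the $\msum$ measure. The whole argument rests on two elementary bounds valid for every feasible solution $\nu$: first, $\msum(\nu) \le d \cdot \mnum(\nu)$, since each \emph{moved} pebble travels along a shortest path, which has length at most the diameter $d$, while unmoved pebbles contribute $0$; second, $\mnum(\nu) \le \msum(\nu)$, since every pebble counted by $\mnum(\nu)$ moves a positive, hence $\ge 1$, distance. Applying the second bound to an optimal $\stcut$-$\msum$ solution $\mu^*_{\msum}$ — which is feasible for $\stcut$-$\mnum$ — yields $c^*_{\mnum} \le \mnum(\mu^*_{\msum}) \le \msum(\mu^*_{\msum}) = c^*_{\msum}$. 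Chaining everything together: because $\mathcal{A}$ is a $\rho$-approximation, $\mnum(\mu) \le \rho \cdot c^*_{\mnum}$; then the first bound gives $\msum(\mu) \le d \cdot \mnum(\mu) \le \rho\,d \cdot c^*_{\mnum} \le \rho\,d \cdot c^*_{\msum}$, so $\mu$ is a $(\rho \cdot d)$-approximate solution for $\stcut$-$\msum$.

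I do not expect a genuine obstacle here; the difficulty, if any, is purely in bookkeeping. One has to be careful to apply the two measure bounds to the \emph{right} solutions — bound (ii) to the $\msum$-optimum $\mu^*_{\msum}$ in order to get $c^*_{\mnum} \le c^*_{\msum}$, and bound (i) to the returned solution $\mu$ — and to note that $d \le n-1 < n$ for a connected graph on $n$ vertices, which justifies the ``$d < n$'' in the statement. The feasibility transfer between the two problems needs no proof, since the predicate is literally the same.
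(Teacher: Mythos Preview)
Your proposal is correct and follows essentially the same argument as the paper: run the $\rho$-approximation for $\stcut$-$\mnum$ on the identical instance, use $\msum(\mu)\le d\cdot\mnum(\mu)$ on the returned solution and $\mnum(\nu)\le\msum(\nu)$ on the $\msum$-optimum to obtain $c^*_{\mnum}\le c^*_{\msum}$, and chain the inequalities. The paper additionally singles out the trivial case $c^*_{\mnum}=0$, but your inequality chain already covers it implicitly.
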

	\begin{proof}
		Consider an instance of $\stcut$-$\msum$ an let $c^*$ be the measure of an optimal solution. If $c^\prime$ is the measure of the optimal solution on $\stcut$-$\mnum$, then $c^* \ge c^\prime$. If $c^\prime = 0$ the claim is trivial, therefore we consider $c^\prime > 0$. Then,  a $\rho$-approximate solution for $\stcut$-$\mnum$ is a $\rho \cdot d$-approximate solution for $\stcut$-$\msum$, as each pebble travels a distance of at most $d$, and $\frac{\rho\cdot d \cdot c^\prime}{c^*} \le \frac{\rho\cdot d \cdot c^\prime}{c^\prime} = \rho \cdot d$.
	\qed\end{proof}

\section{Conclusions}
\label{sec:concl}
In this paper we have been concerned with the emerging field of (centralized) pebble motion problems. In particular, we have provided  approximability and inapproximability results --most of which were tight-- for several relevant variants studied in the literature. Among the issues we left open, the most prominent are those of establishing whether $\ind$-$\mmax$ on trees can be solved in polynomial time, and of settling the computational complexity of $\stcut$-$\mnum$.

Motion planning of devices in a constrained environment deserves a further deep investigation in several respects. Here we have limited our attention to vertex-to-vertex motion on unweighted, undirected graphs, and with the objective of achieving very basic configurations, but it is easy to imagine more challenging scenarios. For instance, pebbles could be deployed on a 2-dimensional environment, or even on a terrain. In this case, the planning task could become substantially more difficult, because of the setting in the continuum. On the other hand, a simplifying yet very interesting scenario is that in which the given graph is a 2-dimensional grid. At an intermediate stage, an intriguing variant is that in which the underlying graph is weighted. We also plan to look at other goal configurations, e.g., reaching a set of vertices inducing a 2-edge-connected subgraph, or connected subgraph with a bounded diameter, just to mention few.

\section*{Acknowledgements}

We would like to thank Feliciano Colella and Simone Galanti for insightful discussions on computational tractability of some of the problems studied in this paper.

\end{document}